\newcommand{\qed}{\hspace*{\fill}
            $\Box$\smallskip}
\newcommand{\thmqed}{\hspace*{\fill}
            $\blacksquare$\smallskip}
\newenvironment{proof}{\noindent {\bf Proof:} \par}
                      {\qed}
 \newtheorem{theorem}{Theorem}
 \newtheorem{corollary}{Corollary}[section]
 \newtheorem{lemma}{Lemma}[section]
 \newtheorem{example}{Example}[section]
 \newtheorem{remark}{Remark}[section]
 \newtheorem{definition}{Definition}[section]
 \newtheorem{proposition}{Proposition}[section]
\begin{document}
%\settimeformat{dottime}
\title{Consumer Search with Chain Stores \footnote{The Author has benefited from insightful comments from Prof. Rady, Prof. Janssen, Prof. Felbermayr and Prof. Holzner}}
\author{Sergey Kuniavsky\footnote{Munich Graduate School of Economics, Kaulbachstr. 45, Munich. Email: Sergey.Kuniavsky@lrz.uni-muenchen.de. Financial support from the Deutsche Forschungsgemeinschaft through GRK 801 is gratefully acknowledged.}}
\maketitle

\begin{abstract}
The paper explores a consumer search setting where the sellers have asymmetries. The model is an extension of the popular Stahl Model, which is widely used in the literature. The extension introduces sellers with heterogeneous stores number, reflecting the typical market structure. The market consists of several sellers heterogeneous in size consumers, some of which face a cost when sequentially searching. The paper shows that no symmetric model exist in the extension and asymmetric NE of the Stahl model are found for comparison. Additional results suggest that smallest sellers will be the ones offering lowest prices, in line with several real world examples provided in the paper. However, profits remain in most cases fixed per store, making a larger firm more profitable, yet with lower sold quantity. The findings suggest that on some level price dispersion will still exist, together with some level of price stickiness, both observed in reality.
\end{abstract}

Keywords: Sequential Consumer Search, Oligopoly, Asymmetric NE

JEL Classification Numbers: D43, D83, L13.

\section{Introduction}

Empirical studies, such as \cite{HungMilk} or \cite{SpainBank}, have established that significant price dispersion exists even for homogeneous goods. As the literature suggests, this effect is observed in many market structures and is persistent. One of the explanations for this phenomenon is that consumers search for the cheapest price. Since searching is costly, consumers may settle down for a slightly higher price. In the literature many papers deal with search models, for example \cite{BJModel}, \cite{CM}, \cite{Stahl89} and \cite{Varian}. These models were developed originally in order to provide a solution to the Diamond Paradox \cite{Diamond}, which predicted a complete market failure. The search models vary in the scope, the length, the stopping condition or the information revealed during the consumer search. Additional Empiric studies, for example \cite{StahlOkEmp}, reveal that the model introduced by Stahl in \cite{Stahl89} perform very well and predicts correctly the pricing model of 86 out of 87 tested products. Moreover, \cite{ShoppersExplained} empirically shows the existence of the two consumer types predicted by this model. Therefore, this paper will concentrate on the Stahl search model.

An additional Phenomenon that can be observed, for example in \cite{HungMilk}, is a possible correlation between the price offered by a seller and the number of stores she has. Namely, the more branches a seller has the higher will be the price offered. Despite the fact that the Stahl Model has a variety of extensions, the literature dealing with asymmetries among sellers is not large. This is a very important extension, as in the real world the number of stores a seller has can vary, for example - see figure \ref{DNum}. Among the few papers in this field is \cite{AsymmetricSearch}, where only a model with 2 sellers is concerned. This paper shows several effects, which appear only when there are at least three sellers available. An additional paper is \cite{Monopolist}, where only a single large firm exists and all other are single store sellers. Already there it is noted that the larger firm charges a higher price. This paper investigates whether this is true in a more general setting than in \cite{Monopolist}. Here the Stahl model is extended, and each seller has a predefined, seller specific store number. The seller sets the same price for all stores (e.g.: Bank offers for saving accounts). The consumers search sequentially and uniformly among stores, rather than sellers. This implies that there is more chance for a consumer to turn up at a store belonging to the larger seller. Additionally, if a searcher is unsatisfied with a price, she would refrain from visiting any additional store of the previously visited seller.

\begin{figure}\label{DiscountersNum}
 \centering
    \includegraphics[height=80mm]{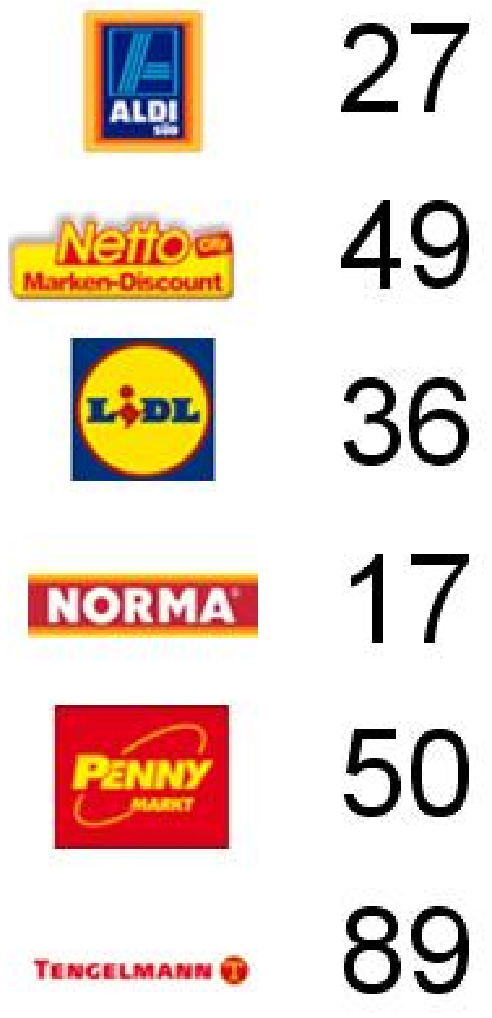}
  \caption{Number of discounter stores in Munich, according to kaufda.de, 5.2011}
  \label{DNum}
\end{figure}

The results overview is as follows:
\begin{enumerate}
\item Description of the asymmetric NE of the Stahl model
\item Formal introduction of an extension to the Stahl model, dealing with heterogeneous sellers
\item Description of the NE in the extended model, when the smallest seller is not unique
\item Description of the NE in the extended model, when the smallest seller unique
\item Examples for NE which illustrate each of the three cases
\end{enumerate}

The first thing one notice when discussing the extended Stahl model (with different size sellers) is a lack of a symmetric equilibrium, where all sellers use the same strategy. The original Stahl model has a unique symmetric Equilibrium, as shown in \cite{Stahl89} and the literature does not goes far beyond it. In order to provide an important building block for the extended Stahl model this paper first analyzes the asymmetric equilibria the original model has. For comparison, in another search model introduced in \cite{Varian}, it is shown in \cite{Tango} that there are asymmetric equilibria, but those can be ignored. In the Stahl model there might be additional equilibria when different settings are considered. For example, it is shown in \cite{BertrandEQ} that one can receive additional equilibria in commonly known games, when the scope is broadened. This paper finds a family of asymmetric equilibria to the original model, where strategies are of (at most) three types - some sellers (at least two) mix over the entire available price interval with a seller invariant distribution, whereas the second group (might be empty) selects the reserve price as a pure strategy. The third group (might be empty) has a pricing distribution which consists of a mass point at the reserve price, and use the same distribution as the first group up to a seller specific cutoff price. As for the extended model, the paper shows that when there are at least two smallest sellers, all sellers except the smallest sellers select the reserve price purely. The remaining sellers have a similar equilibrium to the original Stahl model, but with a lower portion of uninformed consumers (all those who visit one of the smallest sellers). This extends the result in \cite{Monopolist}, and shows that the lowest price will be obtained in one of the smaller chains. Moreover, in all equilibria found here (in both the original and extended models) all sellers have the same expected profit per store, all consumers buy at the first store they visit and no seller will ever set a price above the reserve price. Additional characteristic of the NE is that in the original model the expected profit for is equal for all sellers, and in the extended model, in most cases, the profit is a constant times the chain size (store number).

In case of a single smallest firm the equilibrium structure slightly changes. Now, it is the smallest and 'second smallest sellers' (the ones with the smallest share except the smallest seller) who mixes over the entire interval. Second smallest sellers have a mass point at the reserve price, and in addition select one of the three familiar possibilities - mix over entire interval, cutoff price or reserve price purely. Note that at least one of the second smallest seller mixes. All other sellers select the reserve price purely. Note that here the smallest seller has larger profit per store than all other sellers, and all sellers but her have equal, lower, profit per store (in all other cases the profit per store is equal for all sellers). The structure is similar to the extended model with two or more smallest firms, where the second smallest sellers behave similarly to the smallest sellers. This change is in line to the simple model suggested by \cite{AsymmetricSearch}, in a model with two different sized sellers.

The Stahl model is dealt extensively in the literature, and is a a very popular model. Numerous extension to the Stahl Model were introduced, and the various extensions are dealing with nearly every aspect of the model. Among those are introducing heterogeneous searchers. Example for such extensions are \cite{Hetrogen} and  \cite{StahlHetro}, where the searchers have different cost for each additional store they visit. They can differ by the search scope, as discussed in  \cite{AsymmetricSearch}, where some stores are near, and thus will be searched first. Another extension introduced advertisement costs, as discussed, for example by \cite{Confusion}. There are also models where already the first price is  costly, such as \cite{FirstCosts}, or no possibility to freely return to previously visited store, such as \cite{NoRecall}. The literature has discussion regarding the sequential search in the model and looks also at non-sequential search, for example in \cite{NonSeqSearch}, or the unknown production cost as shown in \cite{AsymInfo}. Most assumptions of the model introduced by Stahl in \cite{Stahl89} are discussed extensively, except one main assumption, used extensively in the literature. This is the focus on symmetric equilibria, where all sellers select an identical strategy. One of reasons is the mathematical complexity: \cite{CM} and \cite{Rothshild} showed that in symmetric equilibria consumer reserve price must exist, and in asymmetric ones it may not. Reserve price assumption is common in the literature, and therefore, the paper considers only NE with reserve price, yet justifies the rationality behind it. Nevertheless, one should note that additional Equilibria without reserve price may exist, and fall beyond the scope of this paper. 
 
An additional outcome of this model can explain price stickiness, as described for example in \cite{sticky}. Many equilibria found here have mass points on certain prices. This implies that with some probability the price in the previous round can be the same also in the next round, even though the seller is mixing. In reality it is known that that prices do not change too often and are sticky. The results of this model can provide an insight on why it is so, as prices selected with mass points can remain unchanged during several periods.

The structure of the paper is as follows: first the original Stahl model is formally shown and knowledge and structure of the game are discussed. Then I turn to look at asymmetric NE of the original model. Afterward I introduce the extended model and discuss the differences between the models to clarify the nature of the extension. Then I will provide the results to the extended model, firstly in the case with several smallest firms and then with a unique smallest firm. Then the implications of the results are discussed, and suggestions on how those results can be empirically tested.

\section{Model}

The Stahl model, as introduced in \cite{Stahl89} is formally described below. Notation was adjusted to the recent literature on the Stahl model.

There are N sellers, selling an identical good. Each seller owns a \textbf{single} store. The production cost is normalized to 0, and assume that the seller can meet the demand. Additionally, there are buyers, each of whom wishes to buy a unit of the good. The mass of buyers is normalized to 1. This implies that there are many small buyers, each of which is strategically insignificant.

The sellers are identical, and set their price once at the first stage of the game. If the seller mixes then the distribution is selected simultaneously, and only at a later stage the realizations take place.

The buyers are of two types. A fraction $\mu$ of buyers are shoppers, who know where the cheapest price is, and they buy at the cheapest store. In case of a draw they randomize \textbf{uniformly} over all cheapest stores, spreading equally among the cheapest stores. The rest are searchers, who sample prices. Sampling price in the first, randomly and \textbf{uniformly} selected, store is free. It is shown in \cite{FirstCosts} that if it is not the case then some searchers would avoid purchase. If the price there is satisfactory - the searcher will buy there. However, if the price is not satisfactory - the searcher will go on to search, sequentially, in additional stores, where each additional search has a cost $c$. The second (or any later) store is randomly and \textbf{uniformly} selected from the previously unvisited stores, and the searcher may be satisfied, or search further on. When a searcher is satisfied, she has a perfect and free recall. This implies she will buy the item at the cheapest store she had encountered, randomizing \textbf{uniformly} in case of a draw.

There is a developing literature where the sellers are asymmetric, such as \cite{AsymmetricSearch}. The main difference in the extension is that the distribution of searchers among stores is not uniform, but a different one. In this section I concentrate on the case where the sellers are identical, and they can choose different pricing strategies.  This extension is addressed in later sections, where the distribution of searchers among stores would be more generic.

The buyers need to be at both types (namely, $0<\mu<1$). If there are only shoppers - it is the Bertrand competition setting \cite{BertrandEQ}, and if there are only searchers    the Diamond Paradox \cite{Diamond} is encountered, both well studied.

Before going on, make some technical assumptions on the model are introduced, the rationality behind them is explained. The assumptions are as follows:

\begin{itemize} 
\item Throughout this paper it is assumed that the sellers cannot offer a price above some finite bound $M$. This has the interpretation of being the maximal valuation of a buyer for the good.
\item Throughout this paper, it is also assumed that searchers accept any price below $c$. The logic behind it is any price below my further search cost will be accepted, as it is not possible to reduce the cost by searching further.
\item To avoid measure theory problems it is assumed that mixing is possible by setting mass points or by selecting distribution over full measure dense subsets of intervals.
\end{itemize}

\subsection{Reserve Price and Knowledge}\label{AnonKnow}

In the symmetric Stahl model the consumers have a reserve price in NE. The reserve price determines the behavior of consumers - the searcher is satisfied and searches no further if and only if the price is (weakly) below her reserve price, unless all stores are visited. If the price is below the reserve price - the search stops and the consumer purchases the good, if not - the search will continue. If all prices are above the reserve price - the cheapest store will be selected, after searching in all stores. In order to maintain in one line with the vast literature of the model, and being able to compare the results reserve price existence is assumed. However, one needs to specify when and how the reserve price is determined. The reserve price is determined simultaneously to the price strategy choice of the sellers the searchers set a reserve price. The reserve price is identical to all searchers, as was also in the original model. It will be denoted throughout the paper as $P_M$. How the reserved price is determined is dealt with below.

Below is the setting that allows searching, as difference in prices can provide incentives to it. Moreover, it extends the symmetric Stahl model knowledge available to the searchers, as the reserve price is $c$ above the expected price of a seller. There, they knew the mixed strategy chosen by the sellers, and their behavior (whether to search further or not) was adjusted accordingly. Here, as the strategies of the sellers do not have to be identical, a price observed implies something on prices not observed yet. After observing price $p$ in a store, the searcher can estimate the probability that the strategy of the seller is a specific one, and from that induce the expected price in other stores. Therefore, it is important to introduce beliefs and explain how exactly these are adjusted while searching.

The searchers have beliefs regarding the prices set. For each possible (pure and mixed) strategy $s$ of the model is attached a belief, stating how many sellers are actually using this strategy $n(s)$ (clearly the sum of $n(s)$ is n, the number of stores). Each strategy has an expected price, denoted $e(s)$. Now, it is easy to explain how the searcher will determine whether she searches on or not. 

Suppose the searcher observed the price $p$. Let us denote the probability that this price $p$ came from strategy $s$ as $prob(p,s)$. For this the searcher calculates chance that $s$ is selected by some seller and the probability that $p$ is the realization of strategy $s$ (relevant for mixed strategies). One needs to note that if some strategies (with positive $n(s)$) have a mass point on $p$ only those will be considered, and if there are no mass points on $p$ the densities will play a role. Formally:

\begin{equation}
prob(p,s)=\frac{n(s)f(s)}{\sum_{p \in s'}n(s')p(s')}
\end{equation}

Now, if the searcher thinks that strategy $s$ was selected, searching further will yield (in expected terms) the expected price in all the other stores. Therefore, it is the expected price, only that $n(s)$ is now one lower (as $s$ was observed in one of the stores). If $n(s) \leq 1$ $s$ will be simply omitted from further calculations:
\begin{equation}
\frac{\sum_{s': n(s')>0, s' \neq s}n(s')e(s') + (n(s)-1)e(s)}{\sum_s' n(s')}
\end{equation}

\begin{remark}
In the extended model instead of the number of sellers with the relevant strategy the belief will state the number of stores with the relevant strategy.
\end{remark}

Searchers search further only when the expected price in a search is at least $c$ lower than the lowest observed price. Below is an example of how to calculate an expected search price, and additionally illustrates that no reserve price may exist:

\begin{example}
Suppose the search cost $c$ is 0.9 and pricing strategies, equally probable from the beliefs of a searcher, are as follows:
\begin{enumerate}
\item Uniform in [1, 9], Exp. value of 5
\item Uniform in [5, 9], Exp. value of 7
\item Pure strategy of 7.
\end{enumerate}
\end{example}

After observing the price of 7 one is certain with prob. 1 that she had encountered the third strategy seller. An additional search will yield the average between the expected values of the two strategies - namely -  6, making a search worthy. 

After observing the price of $7+\varepsilon$ One knows that she had encountered one of the mixed strategies, and due to a likelihood ratio - twice more probable that it is the second strategy. Therefore, with probability 1/3 it is the first str. and probability 2/3 the second str.

If the first strategy was encountered, then an additional search will end up in ether second or third strategy - both with expected price of 7. 

If it is the second strategy, then an additional search will end up with expected price of 5 or of 7, as both can occur with equal probability (due to the beliefs) expected price in an additional search in this case is 6. 

Combining the two possibilities, when taking into account that the second case is twice more probable, the expected price in an additional search is $(2\cdot6+7)/3=6.333$, making another search not profitable.

Here one sees the problematic assumption of the reserve price - it might be the case that it does not exist. However, in order to maintain in one line with the literature I concentrate on NE with a reserve price. Therefore, when one has a suspected a profile to be a NE one still needs to check whether the searchers there behave rationally, when adopting a reserve price. Later some lemmas will be provided which will help in determining the reserve price condition. Therefore, the set of all possible NE may be wider, as some NE without a reserve price may exist.

\begin{definition}
As the beliefs are on the sellers disregarding their identity, we will refer to this belief setting as 'Anonymous Knowledge'.
\end{definition}

\subsection{Game Structure}

The game is played between the sellers, searchers and the shoppers. The time line of the game is as follows:

\begin{enumerate}
\item Sellers select pricing strategies and consumers set reserve price.
\item Realizations of prices occur for sellers with mixed strategies.
\item Shoppers go and purchase the item at the cheapest store
\item Searchers select a store and observe the price in the store
\item If the price observed is weakly below $P_M$ the searcher is satisfied and purchases the item, if not the search continues
\item All unsatisfied searches select one additional store, pay $c$ and sample the price there.
\item If the price observed is below $P_M$ the searcher is satisfied and purchases the item, if not the search continues
\item ...
\item When the seller observed all stores and observed only prices above $P_M$ she would buy at the cheapest store encountered.
\end{enumerate}

At the time the reserve price and the strategies are determined the knowledge of the various agents of the game is as follows:
\begin{itemize}
\item The sellers are aware of the reserve price set by the searchers
\item The searchers have beliefs about which strategies were actually played by the sellers (see subsection \ref{AnonKnow}).
\item The shoppers will know the real price in each store in the moment it is realized.
\end{itemize}

The probability that seller $i$ sells to the shoppers when offering price $p$ is denoted $\alpha_i(p)$. Let $q$ be defined as the expected quantity that seller $i$ sells when offering price $p$. The expected quantity sold by the seller consists of the expected share of searchers that will purchase at her store, plus the probability she is the cheapest store multiplied by the fraction of shoppers, and is also the market share of the seller (shoppers + searchers mass is normalized to 1).

Note that the reserve price ensures that the searcher will purchase at the last visited store, unless all stores were searched.

The utilities of the game are as follows:
\begin{itemize}
\item The seller utility is the price charged multiplied by the expected quantity sold.
\item The consumer utility is a large constant $M$, from which the price paid for the item and the search costs are subtracted.
\end{itemize}

The NE of the game, under our assumptions, is as follows:
\begin{itemize}
\item Searchers have a reserve price. 
\item The searchers beliefs coincide with the actual strategies played.
\item The reserve price is rational for the searchers, according to 'Anonymous Knowledge'.
\item No seller can unilaterally adjust the pricing strategy and gain profit in expected terms.
\end{itemize}

\begin{remark}
As the sum of the searcher and seller utilities may differ only in the search cost, any strategy profile where the searchers always purchase the item at the first store visited is socially optimal.
\end{remark}

\subsection{Results}

Before stating out the main results of the original model, a number of definitions is required. The reserve price is denoted as $P_M$. Additionally, a specific price denoted as $P_L$, and it is the price solving the following equation:
\begin{eqnarray*}
P_L(\mu+\frac{1-\mu}{n})=P_M\frac{1-\mu}{n}\\
P_L = P_M\frac{1-\mu}{(n-1)\mu+1}
\end{eqnarray*}

If the support of seller $i$ strategy is a positive measure interval from $P_L$ to some price $p_i < P_M$, and in addition mass point at $P_M$, it will be said that seller $i$ has a cutoff price of $p_i$.

Now it is possible to describe the NE of the Stahl model:

\begin{theorem}\label{symmod}
In any NE of the Stahl model with a reserve price there are at most three groups of strategies, as follows:
\begin{enumerate}
\item At least two sellers who have the full support of $[P_L,P_M]$ with some NE dependent continuous full support distr. function $F$.
\item A group of sellers (possibly empty) that select $P_M$ as a pure strategy
\item A group of sellers (possibly empty) with an individual cutoff price, such that below the cutoff price the distribution used is the same $F$ as from the first group.
\end{enumerate}
Additionally, all sellers have the same profit of $P_M(1-\mu)/n$.
\end{theorem}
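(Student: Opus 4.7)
The plan is to fix a common equilibrium profit level and then use that constraint to classify the admissible strategies. The first step is to identify $\pi^{\ast}=P_M(1-\mu)/n$ as the unique profit that every active seller must earn. This follows because $P_L$ is defined precisely so that a seller who captures all shoppers and her fair share $(1-\mu)/n$ of searchers at $P_L$ earns exactly $\pi^{\ast}$, while a pure price of $P_M$ yields only the searcher share $(1-\mu)/n$ and hence also $\pi^{\ast}$. Any price strictly above $P_M$ is dominated by $P_M$, since the reserve-price assumption means every searcher arriving at the store will continue searching and the seller will be undercut by anyone at or below $P_M$; any price strictly below $P_L$ yields less than $\pi^{\ast}$ even when winning the entire shopper mass. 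This sandwich confines every strategy to $[P_L,P_M]$.

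Second, I would argue that at least two sellers must randomize on a nondegenerate interval. If at most one seller were mixing and the others played pure prices, then standard undercutting logic breaks equilibrium: two pure-pricing sellers below $P_M$ both want to undercut each other, and the configuration in which one seller randomizes while all others play $P_M$ purely would allow the randomizer to improve by placing a point mass slightly below $P_M$ to sweep all shoppers. Once at least two sellers mix, a Bertrand-style argument rules out any atom in the interior $(P_L,P_M)$ of a mixing seller's distribution: a competitor could shift mass just below the atom and jump the probability of being cheapest. Hence the mixing group is atomless on $(P_L,P_M)$.

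Third, I would pin down the common distribution from the indifference condition. For a mixing seller $i$ at an interior price $p$,
\begin{equation*}
p\left(\mu \prod_{j \neq i} (1 - F_j(p)) + \frac{1-\mu}{n}\right) = \pi^{\ast},
\end{equation*}
where $F_j$ is the CDF of seller $j$ (pure-$P_M$ sellers contribute the factor $1$ on $(P_L,P_M)$, and cutoff sellers enter through their lower-range CDF). Solving for the product shows that $\prod_{j\neq i}(1-F_j(p))$ is independent of $i$, which by the standard algebraic manipulation forces all mixing sellers to share a single continuous CDF $F$ on $[P_L,P_M]$ with $F(P_L)=0$ and $F(P_M)=1$; the number of mixing and pure sellers then pins $F$ explicitly.

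Finally, I would treat the third group by observing that a cutoff $p_i<P_M$ combined with a mass point at $P_M$ is compatible with $\pi^{\ast}$ only if the lower-range distribution coincides with $F$ on $[P_L,p_i]$, since the shopper-capture probability in the indifference equation is driven by $\prod_{j\neq i}(1-F_j)$ which is determined by the mixing group alone; the atom size at $P_M$ is then fixed by normalization. The main obstacle will be ruling out pathological equilibria with disconnected supports, atoms located in $(P_L,P_M)$ for cutoff sellers, or support extending to prices below $P_L$ or above $P_M$ under unusual beliefs; these exclusions require a careful case analysis of one-sided limits of each $F_j$ at the endpoints and of the profitability of atom shifts under the anonymous-knowledge belief update.
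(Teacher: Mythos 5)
Your outline tracks the paper's strategy at the start (prices confined to $[P_L,P_M]$, no interior atoms, at least two mixers, indifference pinning a common CDF), but it leaves unproven exactly the step that carries the theorem. The classification into the three groups is not delivered by the indifference condition alone: the equation $p\bigl(\mu\prod_{j\neq i}(1-F_j(p))+\frac{1-\mu}{n}\bigr)=\pi^{\ast}$ holds only at prices in seller $i$'s support, and elsewhere you only get an inequality, so "the product is independent of $i$" merely shows that two sellers' CDFs agree at prices they both select. What must be shown is that a seller's support can only be the full interval, the single point $P_M$, or an initial segment $[P_L,p_i]$ followed by an atom at $P_M$ --- i.e.\ you must exclude sellers who enter only at higher prices (support starting above $P_L$), sellers with gaps who resume below $P_M$, and disconnected supports generally. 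You flag precisely this as "the main obstacle" and defer it; but this is the heart of the proof. The paper handles it with a dedicated lemma: it first shows the support union is an interval $[P_L,P_M]$ every subinterval of which meets the supports of at least two sellers, and then proves that if seller $i$ is active on an open interval that seller $j$ skips, then $\beta_j=1-F_j$ stays constant there while $\beta_i$ strictly falls, so $\alpha_j$ falls strictly behind $\alpha_i$ and (since only sellers with maximal $\alpha$ at a price can have it in support, and the $\alpha$'s move continuously below $P_M$) seller $j$ can never again price below $P_M$ above that interval. Without this monotone comparison of the $\alpha$'s your argument does not rule out, say, a seller mixing only on an upper subinterval $[a,P_M]$ with $a>P_L$, and the three-group structure does not follow.

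A second, smaller gap: you assert at the outset that every seller earns $\pi^{\ast}=P_M(1-\mu)/n$, but equal profits is itself a lemma in the paper and its delicate case is the configuration in which exactly one seller has no mass point at $P_M$ (she could a priori earn strictly more, since the others' profit is pinned at $P_M(1-\mu)/n$ while hers is not). The paper rules this out by comparing profits at the infimum $p_i$ of her support: since $F_i(p_i)=0\le F_j(p_i)$, any other seller $j$ would earn weakly more than she does at $p_i$, contradicting her strictly higher profit. Your sandwich argument ($P_L$ calibrated to $\pi^{\ast}$, $P_M$ yielding the searcher share) presupposes both that every support reaches $P_M$ and that $P_M$ attracts no shoppers, neither of which you establish; so the common profit level should be proved, not taken as the starting point.
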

\begin{remark}
For any combination where the third group is empty and the first group has at least two sellers exists a corresponding NE.
Moreover, the sellers have the same expected profit of $P_M\frac{1-\mu}{n}$ and the searchers buy at the first store they visit. To see this simply adjust the shoppers share to reflect the game when only searchers visiting the mixing sellers exist.
\end{remark}

Illustration of the three types of strategies can be seen on figure \ref{3G}.
\begin{figure}
 \centering
    \includegraphics[height=30mm]{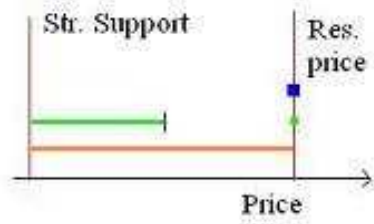}
  \caption{The three types of strategies available in a NE of the extended model}
  \label{3G}
\end{figure}

The proof will be provided in the appendix. However, the first step is required to understand certain results on the extended model. Therefore, it is provided below with a short proof. Several examples will be provided in a later section.

Before continuing I wish to provide some very basic, yet important insights, valid also for the extended model:

\begin{remark}
As noted already in \cite{Stahl89}, due to undercutting no pure NE exist. This is true for the extended model too for the same reasoning.
\end{remark}

\begin{lemma} \label{max_res_lemma}
In both models, no seller offers a price above $P_M$ in NE.
\end{lemma}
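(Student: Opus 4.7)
The plan is to rule out prices above $P_M$ by exhibiting a strictly profitable deviation against any candidate profile in which some seller charges above $P_M$. I would start by locking in a strictly positive lower bound on each seller's equilibrium profit: deviating to the pure price $P_M$ captures every searcher who randomly arrives at her store first, since such a searcher observes a price weakly below her reserve and purchases on the spot. This yields profit at least $P_M(1-\mu)/n$ in the original model and an analogous positive quantity proportional to chain size in the extended model. Hence the equilibrium profit of every seller is bounded below by a strictly positive constant.

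Assume for contradiction that some seller charges above $P_M$ with positive probability, let $\bar{p}$ be the supremum over all sellers of the supremum of their price supports, so $\bar{p} > P_M$, and split into two cases depending on whether an atom sits at $\bar{p}$. In the atom-free case, pick a seller $i$ with $\sup(\text{support}) = \bar{p}$ and prices $p$ in her support approaching $\bar{p}$ from below. At such $p > P_M$ the seller can only sell by being weakly cheapest: a searcher who observes $p$ exceeds her reserve and leaves, returning only after exhausting stores all priced above $P_M$, and only then to the cheapest. Hence $q(p) \leq \Pr(\text{every rival realizes} \geq p)$, which tends to $\Pr(\text{every rival realizes} \geq \bar{p}) = 0$ as $p \uparrow \bar{p}$. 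Then $p \cdot q(p) \to 0$, contradicting the positive lower bound on equilibrium profit.

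In the atomic case, suppose seller $k$ has an atom $a_k > 0$ at $\bar{p}$. For her to make any sale at $\bar{p}$ every rival must realize $\geq \bar{p}$ with positive probability, which by definition of $\bar{p}$ forces every rival to carry a positive atom $a_j$ at $\bar{p}$ as well. Conditional on the ``all at $\bar{p}$'' event (probability $\prod_{j \neq k} a_j$), the unit mass of buyers splits equally across the $n$ tied sellers, so seller $k$'s profit at $\bar{p}$ equals $\bar{p} \cdot (1/n) \cdot \prod_{j \neq k} a_j$. The profitable deviation is to charge $\bar{p} - \varepsilon$ for $\varepsilon$ small enough that $\bar{p} - \varepsilon > P_M$: on the same tie event, seller $k$ is uniquely cheapest, and since every realized market price still exceeds $P_M$ every searcher exhausts her search and returns to $k$, as do all shoppers. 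The deviation profit is therefore approximately $(\bar{p} - \varepsilon) \prod_{j \neq k} a_j$, asymptotically $n \geq 2$ times the equilibrium profit at $\bar{p}$, contradicting best response.

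The main obstacle is the atomic case: one has to argue carefully that the union-supremum can carry positive mass only if every seller does, and that undercutting into the region still above $P_M$ hands the deviator the entire buyer mass, which critically uses that no rival price falls to or below $P_M$ to stop searchers before they reach $k$.
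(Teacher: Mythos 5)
Your proof is correct and takes essentially the same route as the paper's: identify the supremum $\bar p>P_M$ of offered prices, case-split on whether it carries an atom, show the atom-free and not-all-atoms cases yield (near-)zero profit against a strictly positive benchmark from a safe deviation (you deviate to $P_M$, the paper to $c$), and close the all-atoms case by undercutting, which captures the whole market instead of the tied share. The differences are only organizational, with your version spelling out the profit lower bound and the ``all rivals must also have an atom'' step slightly more explicitly than the paper does.
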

Let $p$ be the highest (or supremum) price offered in NE, and $p>P_m$. Such supremum exists as it is assumed that there is a finite bound on the prices.
Let me distinguish between several cases:
\begin{itemize}
\item A unique mass point at $p$ implies profit 0 to the seller offering it. Searchers would go on searching and find something cheaper, whereas shoppers would buy at a cheaper price w.p. 1. A deviation to offer the price $c$ would be a profitable one.
\item No mass points at price $p$ implies profit 0 to all offering it. In case of a supremum price - profit is arbitrarily close to 0. In such case deviation to $c$ is profitable.
\item Some (but not all) offer price $p$ with a mass point. The same case as with a single mass point: the searcher would go on searching until she finds a price cheaper than $p$.
\item All sellers offer $p$ with a mass point - undercutting is profitable. With some positive probability (that all offer price $p$) you would get all the market instead of just $1/n$ of it.
\end{itemize}
To sum it up - for a seller offering a price $p>P_M$ there is a profitable deviation in all cases. \qed

\begin{corollary}
Any NE is socially optimal, whether in the original or the extended model. This is since the total utilities of the sellers and consumers sums up to a constant, as long as the searchers buy at the first store they visit.
\end{corollary}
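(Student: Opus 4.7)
The plan is to reduce social welfare to a constant minus total search costs, and then use Lemma \ref{max_res_lemma} together with the reserve-price rule to show that in any NE no searcher pays any search cost beyond the free first visit.

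First I would write the total welfare explicitly. Summing the seller utilities (price times quantity sold) with the consumer utilities ($M$ per consumer, minus prices paid, minus search costs incurred), every monetary transfer appears once with a positive sign in a seller's utility and once with a negative sign in a consumer's utility. After cancellation, the aggregate welfare equals $M$ (times the unit mass of buyers) minus the total search cost paid in the game. Since $M$ is a fixed constant and the social planner cannot influence it, maximizing welfare is equivalent to minimizing realized search costs, and the unique social optimum is the outcome in which every searcher purchases at her first (free) visit.

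Second I would show that any NE achieves exactly this outcome. By Lemma \ref{max_res_lemma}, no seller ever quotes a price above $P_M$, so every price a searcher encounters at her first visited store is at most $P_M$. By the reserve-price rule described in Section \ref{AnonKnow}, a searcher is satisfied and buys immediately whenever the observed price is weakly below $P_M$. Therefore every searcher is satisfied at her first visit, no searcher pays the cost $c$ of any subsequent search, and the contingency in the game's time line where further stores are sampled is never reached on the equilibrium path. Consequently the aggregate search cost incurred in any NE is zero.

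Combining the two observations, total welfare in any NE equals the constant $M$, which is the maximum attainable value and coincides with the social optimum. I do not expect any serious obstacle here: the only subtle point is to make sure the reserve-price behavior is invoked correctly, namely that weak inequality with $P_M$ suffices to stop the search, which is guaranteed by the fact that $P_M$ is itself the reserve price rather than a price strictly above it. Once that is in place, the corollary follows directly from Lemma \ref{max_res_lemma} and the transfer-cancellation identity.
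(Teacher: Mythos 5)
Your proposal is correct and follows essentially the same route as the paper: the paper's remark that seller and consumer utilities sum to a constant up to search costs, combined with Lemma \ref{max_res_lemma} guaranteeing that no price exceeds $P_M$ so every searcher is satisfied (weakly below the reserve price) at her first, free visit. Your only addition is spelling out the transfer-cancellation identity and the weak-inequality stopping rule explicitly, which the paper leaves implicit.
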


I now show a lemma which will assist in determining the reserve price condition for the searchers:

\begin{lemma}\label{SearcherCondLemma}
Suppose that in a NE every seller has the expected price of at least $P_M-c$. Then setting $P_M$ as a reserve price is rational for the searchers.
\end{lemma}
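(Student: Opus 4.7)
The plan is to use the Anonymous Knowledge formula directly and show that, whenever a searcher observes any on-path price, the expected gain from one more search is at most $c$, which is exactly the condition making $P_M$ a rational reserve price.

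First I would observe that by Lemma \ref{max_res_lemma} every equilibrium price lies in $[0,P_M]$, so any price $p$ the searcher ever observes satisfies $p \leq P_M$. It therefore suffices to verify that, conditional on observing any such $p$, continuing to search is (weakly) unprofitable. Fix an observed price $p$ coming from some seller using strategy $s$. Under Anonymous Knowledge, the searcher's posterior over strategies is $prob(p,s')$, and, conditional on believing $s$ was the observed strategy, her expected price in the next store is the store-weighted average
\[
\frac{\sum_{s'\neq s} n(s')\,e(s') + (n(s)-1)\,e(s)}{n-1},
\]
i.e.\ a convex combination of the expected prices $e(s')$ of the remaining stores.

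The key step is then immediate: by hypothesis $e(s') \geq P_M - c$ for every strategy $s'$ played in the equilibrium. A convex combination of numbers each at least $P_M-c$ is itself at least $P_M-c$, so the conditional expected next-store price is at least $P_M-c$. Averaging once more over the posterior $prob(p,\cdot)$ preserves the bound, hence the unconditional expected next-store price given the observation $p$ is at least $P_M - c$. Adding the search cost $c$ shows the expected total cost of one additional search is at least $P_M \geq p$, so the searcher weakly prefers to stop at any observed $p \leq P_M$. Because the searcher enjoys free and perfect recall, the same inequality applies when she has already sampled several stores and is considering whether the cheapest previously observed price (which is also $\leq P_M$) should trigger purchase or one more draw; continuing is again not strictly profitable. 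This is precisely the behavior encoded by the reserve price $P_M$, proving the lemma.

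The only delicate point I anticipate is making sure the convex-combination argument is valid uniformly across the mass-point and density cases of $prob(p,s)$ and across the possibilities $n(s)=1$ (where $s$ is dropped from further calculations) and $n(s)>1$. All of these still yield weighted averages with nonnegative weights summing to one, so the lower bound $P_M-c$ is preserved throughout, and no separate case analysis is really required.
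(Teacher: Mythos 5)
Your proposal is correct and follows essentially the same route as the paper: invoke Lemma \ref{max_res_lemma} so that no observed price exceeds $P_M$, then note that the expected price from one more search is a convex combination of sellers' expected prices, each at least $P_M-c$ by hypothesis, so continuing can never beat stopping at any observed $p\leq P_M$. The extra care you take with the posterior weights and the $n(s)=1$ case is a harmless elaboration of the same argument, not a different approach.
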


It is not possible to observe a price above $P_M$, therefore, the searcher always stops searching after the first store visited. It is still required to show that after the first price observed it is not rational for the searcher to continue searching.

Suppose a price $q$ was observed. As $q \leq P_M$ it is required to show that an expected price in a search is at least $q-c$. As the expected price in a search is a convex combination of some of the expected prices of sellers it is larger than a lower bound on such expected values. The lower bound on these expected values is $P_M-c$. Therefore, the expected price obtained in an additional store is at least $P_M-c>q-c$, making an additional search unprofitable. \qed

Note that the condition here is only a sufficient one, and it might be the case that additional reserve prices may be rational for searchers. Therefore, the asymmetric NE found here may do not cover all the possible NE of the model.

\subsection{Equilibrium Distribution}\label{EQ_distr}

Now it is possible to elaborate on the structure of the $F$ function which is used in equilibrium by sellers, and what reserve price can be used. Suppose that in equilibrium we have $O=\{1, 2, \ldots o\}$ sellers with orange strategy (mixing over entire support), $B$ sellers with blue strategy (pure reserve price) and $G= \{1,2, \ldots g\}$ sellers with green strategy (cutoff price strategy), with the cutoff prices of $cp_1, cp_2, \ldots cp_g$ and mass points at the reserve price are with mass of $a_1, a_2, \ldots a_g$.

Let us denote the set of sellers with cutoff point below some price $p$ as $B(p)$.

From the structure of the equilibrium all sellers have equal profit. Additionally, all sellers have $P_M$ in support and the reserve price attracts no shoppers. Therefore, the profit for all sellers is:
\begin{equation}
\pi = P_M(1-\mu)/n
\end{equation}

For any price $p$ the expected profit needs to be equal to the expression above. At price $p$ seller $i$ has a certain probability $\alpha_i(p)$ to attract shoppers, if she is the cheapest. This can be calculated as follows:
\begin{itemize}
\item For each seller $j \neq i$, calculate the probability that $j$ offers a price above $p$
\item Multiply these probabilities
\end{itemize}

Let $p$ be a price in $(P_L,P_M)$. For group $O$ this probability is clear and equal to $1-F(p)$. For group $B$ - it is zero. For group $G$ we need to distinguish between two cases: ether $p \in B(p)$ and the probability is $1-F(p)$, or $p\not \in B(p)$ and then it is equal $a(p)$. Combining the cases we get that the expression for the expected profit is as follows:
\begin{equation}
\pi = P_M(1-\mu)/N = p[(1-\mu)/n+\mu(\prod_{j \in O \cup B(p)} (1-F(p)) \prod_{j \in G \setminus B(p)} (a_j))]
\end{equation}
As the $F$ function is the same we can simplify and get:
\begin{equation}
p[(1-\mu)/n+\mu((1-F(p))^{o+|B(p)|} \prod_{j \in G \setminus B(p)} (1-a_j))]=P_M(1-\mu)/n
\end{equation}
Extracting $F(p)$ form this equation will yield:
\begin{equation}
F(p)=\sqrt[o+|B(p)|]{1-(\frac{P_M}{p}-1)\frac{1-\mu}{n \prod_{j \in G \setminus B(p)} (a_j)}}
\end{equation}

Note that at the point of the cutoff price $a_j(p)=1-F(p)$, and therefore $F$ will be continuous, and as a certain expression instead of decreasing remains constant will also be differentiable. Therefore, it is still possible to calculate the density and expected value regularly. The last step, based on lemma \ref{SearcherCondLemma}, require finding the expected value $E(F)$, and setting the reserve price at $E(F)+v$. As this step is technical and the expressions involved are complex, this step is not done here for the general case. At the section with examples specific cases are provided.

\section{Extended Model}

In order to deal with heterogeneous firms an extension to the model is required. Instead of a single store, seller may have several stores, which can vary among sellers. This makes searching uniformly random among stores and it is no longer uniformly random among sellers.

The extension of the model is very simple. The searchers do not spread uniformly over the stores, but according to some given propensities. Seller $i$ has $n_i$ stores of the available N stores, and the searchers are distributed according to this number. Note that it would be more convenient to think in the extended model of sellers as store chains, with various number of stores.

This makes the fraction of searchers initially visiting seller $i$ equal to $\frac{n_i}{N}$ instead of $1/n$. Further search is also done according to the propensities, and the probability to visit seller $i$ is $n_i$ divided by the sum of the $n_i$'s of previously unvisited stores.

The section describing the model remains the same, except the following points: 'uniform' should be changed to 'according to the propensities $n_i$'. An additional difference is that after visiting and being unsatisfied with a single store of a given seller the searcher would not return to visit another store of the same seller. The values of $n_i$ are common knowledge among sellers. Once a store is visited the size of the store is revealed to the searcher. Searchers beliefs are as follows: for any possible strategy $s$ the searchers have belief on how many stores applied this pricing strategy (see also section \ref{AnonKnow}. Store number of the sellers remains unknown to searchers.

One technical assumption I make trough this section is that the smallest firm (the one with the smallest number of stores) is not unique. The case with a unique smallest firm will be dealt in the next section.

Let the smallest value of the size parameter be denoted as $n_m$.

\subsection{Difference between Models}

The main point of the extension is to catch the fact depicted in figure \ref{DiscountersNum}. Heterogeneous sellers have different number of stores, and due to advertising keep the price fixed in all of the stores of a given seller. This is not always the case, as some goods do not have to be fixed over all stores of a seller. Therefore, one needs a distinction between \textbf{store goods}, that each store set the prices individually, and \textbf{chain goods} which have a fixed price in all chains stores.

A good example for the distinction can be understood via two examples for a seller - a bank and a fuel station. Typically all offers of a given bank do not vary among branches. One would get the same offers for a mortgage, credit, interest rate for deposits and other banking products no matter the specific branch of the bank one approaches. Clearly, a different bank would make different offers, but usually it is the case that a specific branch of a bank 'X' on street 'A' would have the same offers as the branch of bank 'X' in street 'B'. When one looks at fuel stations, one sees the exact opposite. Every single fuel station offers station-specific prices, and it is usually the case that fuel station of firm 'Y' on street 'A' would have a different price than a fuel station of the same firm 'Y' on street 'B'. 

For the fuel station example, the original Stahl model would suffice, as there being unsatisfied with a specific station does not imply avoiding the seller completely. However, if after visiting one branch of bank 'X' one does not find a satisfactory offer, there is no reason to visit yet another branch of the same bank. Therefore, present here are both aspects of the extension - the probability to encounter each seller is proportional to the number of stores the seller has and the fact that at most a single store of a given seller would be visited by a given searcher.

\begin{remark}
To emphasize the difference I will refer to the models as '\textbf{original}' and '\textbf{extended}'.
\end{remark}

\subsection{Results}

First, a proof for the fact that no symmetric NE exists in the extended model is provided. This result is valid also for the next section, when smallest seller is unique.

\begin{lemma} \label{no_asym_extended}
In the extended model no symmetric NE (where all sellers choose the same pricing str.) exists.
\end{lemma}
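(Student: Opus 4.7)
My plan is to argue by contradiction: suppose a symmetric NE exists in which every seller plays the common strategy $s$. First I would rule out pure strategies using the undercutting argument recalled in the preceding remark, so $s$ must be a nontrivial mixed strategy with CDF $F$. By Lemma~\ref{max_res_lemma}, $\mathrm{supp}(F) \subseteq [0,P_M]$; let $\bar p := \sup\mathrm{supp}(F)$.

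The crux of the argument exploits the genuine heterogeneity of the extended model: pick two sellers $i$ and $j$ with $n_i \neq n_j$. Both play $F$, so their supports coincide, and the NE indifference condition requires each to earn its equilibrium profit at every price in that common support. At any interior price $p$ where $F$ is continuous I would write seller $i$'s expected profit as
\begin{equation*}
\pi_i(p) \;=\; p\left[\tfrac{n_i}{N}(1-\mu) \;+\; \mu\,(1-F(p))^{k-1}\right],
\end{equation*}
where $k$ denotes the number of sellers. The first bracketed term is the share of searchers initially landing in one of $i$'s $n_i$ stores and accepting $p\le P_M$; the second is the probability that every other seller prices strictly above $p$, and since all of them draw from the same $F$ this probability depends on $F$ and $p$ but not on the identity $i$.

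Letting $p\to\bar p$ (so $F(\bar p)=1$ in the atomless case) the shopper term vanishes and the equilibrium profit is $\pi_i = \bar p\,\tfrac{n_i}{N}(1-\mu)$. Substituting back and rearranging gives
\begin{equation*}
(1-F(p))^{k-1} \;=\; \tfrac{n_i(1-\mu)}{N\mu}\left(\tfrac{\bar p}{p}-1\right).
\end{equation*}
The left-hand side depends only on $p$, while the right-hand side scales linearly in $n_i$. Writing the analogous identity for seller $j$ and taking the ratio forces $n_i = n_j$, contradicting our choice, so no symmetric mixed NE can exist.

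The main obstacle I anticipate is the treatment of atoms of $F$: at an interior mass point or at $\bar p$ itself, the ``strictly cheapest'' probability in the profit formula picks up a tie-breaking contribution of the form $n_i/(n_i+\sum_{j\in\text{tied}} n_j)$, and the clean limit toward $\bar p$ is unavailable. I would handle this by comparing the indifference condition at two distinct prices in the support at which $F$ is continuous---which exist because $F$ is nondegenerate---so that the own-search term still carries a linear $n_i$-dependence that cannot be matched by the $i$-independent distributional term. The atomless computation above already captures the essence; the bookkeeping for mass points is routine but technical.
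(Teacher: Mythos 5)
Your proposal is correct and takes essentially the same route as the paper: the whole argument rests on the fact that under a common $F$ the shopper-attraction term is seller-independent while the searcher term scales with $n_i$, so the indifference condition cannot hold simultaneously for two sellers of different sizes. Your fallback step of comparing the indifference condition at two distinct support prices is exactly the paper's own proof (it subtracts the two indifference equations to obtain $p(n_i-n_j)=q(n_i-n_j)$, forcing $p=q$ or $n_i=n_j$), so your limit-to-$\bar p$ computation is merely a cosmetic variant of the same argument.
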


As before I concentrate only on NE with a reserve price, denoted $P_M$. Due to undercutting no pure NE can exist. Suppose that exists a mixed strategy symmetric NE. Let the pricing strategy in the symmetric NE be denoted as $s$, and let $F$ be the distribution function describing the pricing strategy $s$.

As noted before, no seller sets a price above $P_M$, as in such case the highest price in the support yields profit 0.

Suppose that two prices $p$ and $q$ are in the support of $s$ with positive density or mass point. Additionally, exist two sellers $i,j$ such that $n_i>n_j$.

Note that due to symmetry of the strategy choice, the probability of a seller to attract shoppers with price $p$ is equal to $(1-F(p))^{n-1}$. The next step is to write down the profits of seller $i$, for both prices. Those need to be equal, as mixing is possible only between prices that yield the same expected profit:
\begin{equation}
\pi_i(p)= p[(1-F(p))^{n-1} \mu + (1-\mu)n_i/N] =  q[(1-F(q))^{n-1} \mu + n_i] = \pi_i(q)
\end{equation}
This contains the expected quantity sold by the seller: prob. to be cheapest times the quantity of shoppers and the quantity of the searchers, which contain  only the initial searchers visiting the store (due to lemma \ref{max_res_lemma}).

Similarly, the profit of seller $j$ is as follows:
\begin{equation}
\pi_j(p)= p[(1-F(p))^{n-1} \mu + (1-\mu)n_j/N] =  q[(1-F(q))^{n-1} \mu + (1-\mu)n_j/N] = \pi_j(q)
\end{equation}
After subtracting the second equation from the first one obtains the following:
\begin{equation}
p[ n_i - n_j] =  q[n_i - n_j]
\end{equation}
From here, ether $p=q$ or $n_i = n_j$, both cannot occur due to our assumptions. \qed

Then one may ask are there any NE, and if so, how do they look like. The following theorem provides an answer to that question:

\begin{theorem} \label{asymmod}
In the extended Stahl Model, with at least two firms, the asymmetric NE with reserve price $P_M$ must have the following form:
\begin{itemize}
\item All the sellers who have a larger number of stores than $n_m$ select the reserve price as a pure strategy. 
\item The agents with $n_m$ need to choose their strategies according to theorem \ref{symmod}, where all shoppers and only the searchers visiting initially one of the smallest firms take part.
\item Profit of seller $i$ is $Const \cdot n_i$.
\item Searchers buy at the first store they visit
\end{itemize}
\end{theorem}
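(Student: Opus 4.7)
The plan is to proceed in three broad stages. First, I establish the ``buy at first store'' property and pin down the profit formula. Since the reserve price is $P_M$ and Lemma \ref{max_res_lemma} forbids any seller from charging above $P_M$, every searcher is satisfied at her first store, so seller $i$'s searcher share at any $p\le P_M$ is exactly $(1-\mu)n_i/N$ and her expected profit at price $p$ takes the form
\[
\pi_i(p) = p\left[\mu \prod_{k \neq i} G_k(p) + (1-\mu)\frac{n_i}{N}\right],
\]
where $G_k$ denotes the survival function of seller $k$'s pricing strategy.

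Second, I would replicate the algebra of Lemma \ref{no_asym_extended} to rule out mixed play across different sizes on a common interval. Assuming (as in Theorem \ref{symmod}) that sellers of the same size share a common mixing distribution, suppose sellers $i$ and $j$ with $n_i\neq n_j$ both have positive density on two prices $p_1\neq p_2$. Writing the profit-equality conditions for each seller between $p_1$ and $p_2$ and subtracting them cancels the prob-cheapest terms and leaves $(p_1-p_2)(n_i-n_j)(1-\mu)/N=0$, a contradiction. Hence every interval of mixed play is populated by sellers of one common size $n^*$.

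The critical step is to show $n^* = n_m$. Suppose instead $n^* > n_m$. From the mixers' equal-profit condition and the fact that $P_M$ lies in their support, the common mixer profit equals $P_M(1-\mu)n^*/N$, and the equilibrium equation at an interior $p$ reduces to $\mu G^*(p)^{o-1} = (P_M/p - 1)(1-\mu)n^*/N$, where $o$ is the number of mixers; in particular $G^*(p)\to 1$ as $p$ approaches the lower endpoint of the mixing support. A smallest seller deviating to such a $p$ would earn $p[\mu G^*(p)^o + (1-\mu)n_m/N]$, which, after substituting the mixers' identity, exceeds her status-quo payoff $P_M(1-\mu)n_m/N$ precisely when $G^*(p) > n_m/n^*$. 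Since $G^*\to 1$ at the lower endpoint while $n_m/n^* < 1$, the deviation is profitable on a neighborhood, contradicting equilibrium. Thus the mixers are exactly the smallest sellers.

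Once mixing is confined to the smallest sellers, any non-smallest seller $i$ is forced into a pure strategy. A pure price $p^*<P_M$ is ruled out by the same substitution trick: the best-response inequality $\pi_i(p^*)\ge\pi_i(P_M)$ reduces to $G^*(p^*)\ge n_i/n_m > 1$, impossible. Hence non-smallest sellers play $P_M$ purely with profit $P_M(1-\mu)n_i/N = \text{Const}\cdot n_i$, and the residual sub-game among the $n_m$-sellers is precisely the original Stahl model with shopper mass $\mu$ and searcher mass equal to the initial share reaching those firms, directly governed by Theorem \ref{symmod}. The main obstacle is the deviation computation in the third stage: one must verify continuity of $G^*$ up to the infimum of the mixers' support and handle possible mass points carefully so that the ``$G^*(p)$ near $1$'' argument truly yields a strict improvement.
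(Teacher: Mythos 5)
Your stages 1, 3 and 4 are in the spirit of the paper's argument (the substitution of a mixer's indifference condition into a deviator's profit is exactly the trick the paper uses, e.g.\ in the proof of Proposition \ref{singlefirm_ex} and in the corollary to Lemma \ref{None_Above}), but the load-bearing step of your plan --- stage 2 --- has a genuine gap. You claim that writing the two profit-equality conditions for sellers $i$ and $j$ with $n_i\neq n_j$ and subtracting ``cancels the prob-cheapest terms.'' That cancellation is only valid when both sellers face the \emph{same} probability of being cheapest, i.e.\ $\alpha_i(p)=\alpha_j(p)$, which in Lemma \ref{no_asym_extended} follows from the hypothesis that \emph{all} sellers play one common distribution. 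In an asymmetric candidate equilibrium $\alpha_i(p)=\prod_{k\neq i}\bigl(1-F_k(p)\bigr)$ and $\alpha_j(p)=\prod_{k\neq j}\bigl(1-F_k(p)\bigr)$ differ precisely by the factors $1-F_j(p)$ versus $1-F_i(p)$, so the subtraction leaves $\mu$-terms involving $F_i-F_j$ that you cannot discard; your ``common size within any mixing interval'' conclusion does not follow. Indeed the conclusion is false as a general fact about the extended model: in Proposition \ref{singlefirm_ex} the sellers $m$ and $j$ have different sizes and both mix over the whole interval $(P_L,P_M)$. So any correct proof must use the standing assumption of this theorem that the minimal size $n_m$ is attained by at least two sellers, and stage 2 is where that assumption has to enter; your argument never uses it.

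The paper closes this gap differently: it first re-derives the structural lemmas (no price above $P_M$, no mass points below $P_M$, $P_M$ is the supremum of every support, the support union is an interval covered everywhere by at least two sellers), then proves that all sellers have equal profit per branch --- and it is exactly here that the existence of a second seller with $n_j\le n_i$ is invoked --- and finally shows that at any price only sellers maximizing $\alpha_i(p)/n_i$ can be in support, yielding Lemma \ref{None_Above} (a seller who skips an interval selects nothing above it except $P_M$) and the conclusion that larger sellers, for whom the break-even bottom price $P_L$ is higher, must charge $P_M$ purely. Your stages 3--4 reproduce pieces of this (your deviation computation near the lower endpoint is essentially the observation that $P_L$ is increasing in $n_i$), but they presuppose the common size and common distribution of the mixers that stage 2 was supposed to deliver, and you also do not establish the interval/two-seller-coverage facts needed to assert that the residual game among the $n_m$-sellers is governed by Theorem \ref{symmod}. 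As written, the proposal therefore does not constitute a proof of the theorem.
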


In any NE the searchers visiting one of the larger sellers will observe the reserve price. The sellers with the smallest chains, the shoppers and the remaining searchers will then play the original Stahl model. An equilibrium of this game will determine the prices offered by the smallest sellers.

Illustrating the various consumers of the game is depicted on figure \ref{consvar}.
\begin{figure}
 \centering
    \includegraphics[height=30mm]{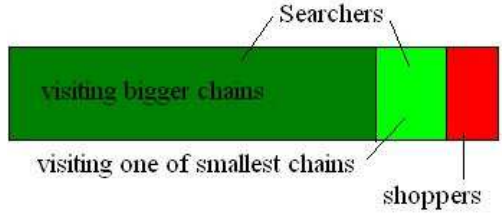}
  \caption{Consumers and their initial visit}
  \label{consvar}
\end{figure}

As the distribution function involves only firms with equal size, the calculations required are identical to the ones done for the original model in subsection \ref{EQ_distr}.

\section{Unique Smallest Firm}

So far in the extended model the case where the smallest firm was unique was omitted. This case is dealt here. In this structure exist NE, however these slightly differ in structure from the previous one. In \cite{AsymmetricSearch} Stahl model with two different sized sellers is discussed, but in a more general setting and more sellers. The NE below stands in line with two sellers behave according to the results in \cite{AsymmetricSearch}, while the rest offer the reserve price purely.

Let the smallest seller be denoted as $m$ and (one of) the second smallest seller as $j$, with corresponding store numbers $n_m$ and $n_j$. Additionally the shares of consumers that would visit seller $i$ as searchers (note lemma \ref{max_res_lemma}) are denoted as follows:

\begin{equation}
Src_i = \frac{n_i}{N}(1-\mu) 
\end{equation}

\begin{proposition}\label{singlefirm_ex}
Exists a NE with a reserve price $P_M$ such that:
\begin{enumerate}
\item All sellers except $m$ and $j$ select $P_M$ purely.
\item The lowest price in support is $P_L = P_M(\frac{Src_j}{\mu+Src_j})$.
\item $m$ and $j$ mix on the entire interval $(P_L,P_M)$.
\item The distributions are as provided below, and $j$ has a mass point at $P_M$.
\end{enumerate}
\begin{eqnarray}
F_m(p)=1-\frac{Src_j}{\mu}(\frac{P_M}{p}-1)\\
F_j(p)=(1-\frac{P_L}{p})(1+\frac{Src_m}{\mu})\\
P_M=\frac{c}{1 - ln(\frac{Src_j+\mu}{Src_j})\cdot \frac{Src_j}{\mu}}
\end{eqnarray}
\end{proposition}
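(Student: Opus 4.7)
My plan is to treat Proposition \ref{singlefirm_ex} as a verification: propose the three-layer strategy profile and check, in sequence, the indifference of the two mixing sellers, the non-profitability of deviations for the larger sellers, and the rationality of the reserve price. The crucial simplification is that, once all $k\notin\{m,j\}$ are pinned to $P_M$, those sellers neither steal any shopper mass on $(P_L,P_M)$ nor contribute any randomness below $P_M$. The subgame between $m$ and $j$ is therefore a two-chain Stahl-style problem with shopper mass $\mu$ and initial searcher masses $Src_m,Src_j$, into which I plug the candidate distributions and the reserve-price expression.

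First I would derive $F_m$ and $F_j$ from the two indifference conditions. Seller $m$'s expected revenue at $p\in(P_L,P_M)$ is $p[Src_m+\mu(1-F_j(p))]$, which must equal the lower-endpoint value $P_L(Src_m+\mu)$; solving yields $F_j(p)=(1-P_L/p)(1+Src_m/\mu)$. Symmetrically, $\pi_j(p)=p[Src_j+\mu(1-F_m(p))]$ must equal $P_M\cdot Src_j$ (no mass point at $P_M$ for $m$, so $F_m(P_M)=1$), giving $F_m(p)=1-(Src_j/\mu)(P_M/p-1)$. Enforcing $F_m(P_L)=0$ fixes $P_L=P_M\,Src_j/(\mu+Src_j)$, while the mass point of $j$ at $P_M$ comes out as $1-F_j(P_M^-)=(Src_j-Src_m)/(Src_j+\mu)$, strictly positive under the unique-smallest assumption.

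The least routine step is verifying that no larger seller $k$ gains from shading below $P_M$. Her profit at $p\in(P_L,P_M)$ is $\pi_k(p)=p[Src_k+\mu(1-F_m(p))(1-F_j(p))]$; substituting the identity $p\mu(1-F_m(p))=Src_j(P_M-p)$ obtained from $j$'s indifference collapses this to
\begin{equation}
\pi_k(P_M)-\pi_k(p)=(P_M-p)\bigl[Src_k-Src_j(1-F_j(p))\bigr],
\end{equation}
which is non-negative because $Src_k\geq Src_j$ and $1-F_j(p)\leq 1$. Downward deviations by $m$ or $j$ below $P_L$ are dominated since the probability of being cheapest is already one, so revenue scales linearly in price; upward deviations beyond $P_M$ are ruled out by Lemma \ref{max_res_lemma}.

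Finally, for the reserve-price equation I would invoke Lemma \ref{SearcherCondLemma} and show that every seller's expected price is at least $P_M-c$. For sellers pricing purely at $P_M$ this is immediate; for $m$ a direct computation of $\int_{P_L}^{P_M}p\,dF_m(p)$ yields $E_m[p]=(Src_j P_M/\mu)\ln((Src_j+\mu)/Src_j)$, and the closed form in the proposition is precisely the solution of $E_m[p]=P_M-c$. A short manipulation then gives $E_j[p]-E_m[p]=c(Src_j-Src_m)/(Src_j+\mu)\geq 0$, so $m$ supplies the binding constraint. The main obstacle I anticipate is the bookkeeping here: confirming that, under anonymous-knowledge updating after observing some intermediate $p\in(P_L,P_M)$, the posterior continuation value never drops below $p-c$ beyond what the sufficient condition already guarantees.
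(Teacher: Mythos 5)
Your proposal is correct and follows essentially the same route as the paper's proof: verify the indifference of $m$ and $j$ on $(P_L,P_M)$, rule out deviations by larger sellers $k$ by substituting $j$'s indifference identity $p\mu(1-F_m(p))=Src_j(P_M-p)$ together with $Src_k\geq Src_j$, exclude prices outside $[P_L,P_M]$ via Lemma \ref{max_res_lemma}, and establish reserve-price rationality through Lemma \ref{SearcherCondLemma} with $E_m[p]=P_M-c$ binding. Your exact decomposition of $\pi_k(P_M)-\pi_k(p)$ and the closed form $E_j[p]-E_m[p]=c(Src_j-Src_m)/(Src_j+\mu)$ are slightly sharper than the paper's inequality chain and density comparison, but they are refinements of the same argument rather than a different approach.
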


\begin{proof}

Note that from the construction of $P_M$ it is clear that it is above $c$.

First see that there is no deviation to $m$ and $j$:

It is easy to verify that there is a constant profit for sellers $m$ and $j$ in the interval, since:
\begin{eqnarray}
\pi_m(p)=((1-F_j(p))\mu+Src_m)\\
\pi_j(p)=((1-F_m(p))\mu+Src_j)
\end{eqnarray}
Offering prices below $P_L$ is not profitable, as already in $P_L$ one attracts the shoppers w.p.1. Prices above $P_M$ will not be offered due to lemma \ref{max_res_lemma}. Therefore, $m$ and $j$ have no profitable deviation.

Seller $k$ who is not $m$ or $j$ would similarly refrain form selecting prices below $P_L$ or above $P_M$. Deviating to a price $p \in (P_L, P_M)$ would yield the following profit:
\begin{eqnarray*}
\pi_k(p)=p((1-F_m(p))(1-F_j(p))\mu+Src_k) < \\
p((1-F_m(p))\mu)+p(Src_k)
\end{eqnarray*}
Seller $j$ has price $p$ in support and therefore:
\begin{equation}
\pi_j = p((1-F_m(p))\mu)=(P_M-p)Src_j
\end{equation}
Combining the equations, the following expression is obtained:
\begin{equation}
\pi_k(p)<(P_m-p)Src_j+pSrc_k
\end{equation}
Note that the size of $k$ is at least the size of $j$, implying that $Src_k \geq Src_j$. Therefore, the profit of seller $k$ when offering price $p \in (P_L,P_M)$ is below $P_M Src_k$. However, this profit is obtained by $k$ when offering $P_M$, and therefore, has no profitable deviation from $P_M$.

Lastly, the reserve price is rational. Note that if one compares the derivatives of $F_m$ and $F_j$:
\begin{eqnarray}
f_m=\frac{Src_j P_m}{\mu p^2}\\
f_j=\frac{(\mu+Src_m)P_L}{\mu p^2}
\end{eqnarray}
Using the facts that $Src_m<Src_j$ and $P_M Src_j = P_L (\mu+Src_j)$ it is easy to see that $f_m(p)>f_j(p)$ for any price in $(P_L,P_M)$. As the distribution of $j$ has a mass point at the maximal price of $P_M$, the expected value of $F_m$ is smaller than the one of $F_j$. From lemma \ref{SearcherCondLemma} in order for the reserve price to be rational $E(F_m)$ needs to be at least $P_M-c$, which occurs with equality, due to the structure of $P_M$:
\begin{equation}
E(F_m(p))=\int_{P_L}{P_M}p f_m(p) = \frac{Src_j P_m}{\mu}\int_{P_L}{P_M}\frac{1}{p} = \frac{Src_j P_m}{\mu}\cdot ln\left(\frac{Src_j+\mu}{Src_j}\right) = P_m-c
\end{equation}
\end{proof}

Note that here all sellers but $m$ have the same profit per store and a mass point at $P_M$. The smallest firm has a larger profit per store, and offers more generous discounts. Also, additional equilibria may exist where several of the smallest firms after $m$ also mix, but again, with smaller mass than $m$. This comes in line with the results pointed out in \cite{AsymmetricSearch}. The next step is to denote the general structure of NE in such case. It is given in the theorem below:

\begin{theorem}\label{singlefirm}
In the case of a unique smallest seller the NE with a reserve price $P_M$ of the game look as follows:
\begin{itemize}
\item All sellers with size above $n_j$ select the reserve price purely.
\item The lowest price in the support union is $P_L = P_M(\frac{Src_j}{\mu+Src_j})$.
\item Seller $m$ mixes with a continuous, dense distr. function $f_m$ over $(P_L,P_M)$.
\item Some sellers with size $n_j$ also mix over the entire interval with a continuous dense $F_j$, such that $F_j(p)<F_m(p)$ for all $p \in (P_L,P_M)$, and in addition have a mass point at $P_M$. 
\item As in the previous case some sellers with size $n_j$ can also select $P_M$ purely or have a cutoff price.
 The sellers with a cutoff price use the same $F_j$ below the cutoff price.
\item All sellers except $m$ have the same profit per store, and $m$ has a higher profit per store.
\end{itemize}
\end{theorem}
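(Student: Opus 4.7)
The overall approach is to reduce any NE to the structure of Proposition~\ref{singlefirm_ex}, broadened to allow several size-$n_j$ sellers in the cutoff and pure-$P_M$ groups, via the same three-group skeleton used in Theorem~\ref{symmod}. Lemma~\ref{max_res_lemma} already rules out prices above $P_M$, and the standard undercutting argument rules out pure profiles, so the central task is to show that no seller with size strictly larger than $n_j$ can put mass below $P_M$.

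For that step I would argue by contradiction. Suppose a seller $k$ with $n_k>n_j$ mixes at some $p<P_M$; indifference at the top of $k$'s support forces $\pi_k^\ast=P_M\,Src_k$, since whenever $m$ is mixing on the entire open interval the shoppers are captured by $m$ with probability one above $k$'s quoted price. Solving that indifference for the product of the other sellers' survival functions and substituting into the hypothetical deviation profit of a pure-$P_M$ seller of size $n_j$ yields
\begin{equation*}
\pi'_j(p) - P_M\,Src_j \;=\; (P_M-p)\bigl[\,Src_k\,(1-F_k(p)) - Src_j\,\bigr],
\end{equation*}
which is strictly positive near the bottom of $k$'s support because $1-F_k\to1>Src_j/Src_k$ there. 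Hence no $n_j$-seller can play $P_M$ purely, so at least one $n_j$-seller must mix alongside $m$ and $k$. But then evaluating each mixer's indifference at the common lower endpoint $P_L$ of the supports gives $P_L(\mu+Src_l)=P_M\,Src_l$ separately for $l=j$ and $l=k$, impossible because $Src_j\neq Src_k$. A short bookkeeping argument handles the variant in which $k$'s support starts strictly above $P_L$, by comparing $k$'s profit at the infimum of its support with the profit from mimicking a marginally lower price and invoking continuity of $F_m$ and $F_j$.

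With mixing restricted to sizes $n_m$ and $n_j$, every larger seller contributes a factor of one to the competitor products on $(P_L,P_M)$, so the indifference system reduces exactly to the one solved in Proposition~\ref{singlefirm_ex}. I would read off $F_m$ and the common $F_j$ used by all mixing $n_j$-sellers; the boundary conditions $F_m(P_L)=F_j(P_L)=0$ together with $\pi_j^\ast=P_M\,Src_j$ pin down $P_L = P_M\,Src_j/(\mu+Src_j)$. The inequality $F_j(p)<F_m(p)$ on $(P_L,P_M)$ follows from the density comparison in Proposition~\ref{singlefirm_ex}, using $Src_m<Src_j$. A cutoff-price $n_j$-seller must use the same $F_j$ below its cutoff because, given the fixed residual competitor environment, the indifference ODE for the density on the support is identical for every mixing $n_j$-seller, with only the mass at $P_M$ free (this is exactly the role of the third group in Theorem~\ref{symmod}). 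The profit claims then follow directly: every non-$m$ seller earns $P_M\,Src_i$ in total and thus $P_M(1-\mu)/N$ per store, while $\pi_m^\ast = P_L(\mu+Src_m) > P_M\,Src_m$ gives $m$ strictly higher profit per store.

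Finally, rationality of $P_M$ as reserve price follows from Lemma~\ref{SearcherCondLemma}: $F_m$ is stochastically dominated by $F_j$ and by any mass at $P_M$, so the binding seller for the sufficient condition is $m$, and $E(F_m)=P_M-c$ defines $P_M$ implicitly, mirroring the final computation in Proposition~\ref{singlefirm_ex}. The main obstacle I anticipate is the contradiction step ruling out mixing by sellers of size strictly above $n_j$, since it blends an indifference condition for one seller with a deviation analysis for another and requires a separate treatment when the large seller's support does not share its lower endpoint with those of $m$ and $j$.
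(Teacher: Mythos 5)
Your central computation is essentially sound and runs parallel to the paper's route (the paper packages the same facts into a ``profit per branch'' lemma plus Lemma \ref{None_Above}), but as written it has two genuine gaps. First, you justify $\pi_k^\ast=P_M\,Src_k$ by asserting that $m$ mixes over the whole interval with no atom at $P_M$ -- but that is part of what the theorem asserts and has not been established at that point. The paper's order is the reverse: it first pins down the atom structure at $P_M$ (not all sellers can have atoms there; the unique atomless seller must be $m$, because otherwise $m$ profitably deviates to $P_L$, where her per-branch profit exceeds that of the seller currently at $P_L$), and only then knows that every seller other than $m$ earns exactly $P_M\,Src_i$. Your step is repairable without circularity, since $\pi_k^\ast\ge P_M\,Src_k$ holds unconditionally ($k$ can always price just below $P_M$ and keep her searchers), but you should not lean on $m$'s equilibrium behaviour to get it.

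Second, and more seriously, your closing contradiction evaluates the indifference of both $j$ and $k$ at a \emph{common} lower endpoint $P_L$, and the fallback you offer for the case where $k$'s support starts strictly above $P_L$ -- ``comparing $k$'s profit at the infimum of its support with the profit from mimicking a marginally lower price'' -- does not deliver anything: moving just below the infimum of one's own support lowers the price while raising the winning probability only weakly, so it is never a strictly profitable deviation and gives no contradiction (it is not profitable even in the symmetric equilibrium). This is precisely where the paper needs Lemma \ref{None_Above}, i.e.\ the monotone comparison of $\alpha_i/n_i$ across an interval that one seller covers and another skips, which forces every seller pricing below $P_M$ to have support reaching down to $P_L$ and hence forces all non-$m$ mixers to share one size. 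Alternatively, your own algebra closes the case without common endpoints: take any seller $x$ with $n_x<n_k$ whose profit is pinned at $P_M\,Src_x$ (one exists among $m$ and the $n_j$-sellers, since at most one seller lacks an atom at $P_M$), and evaluate her profit at a price $p$ just above the infimum of $k$'s support; using $k$'s indifference and $F_{x}(p)\le 1$ one gets $\pi'_x(p)-P_M\,Src_x\ \ge\ (P_M-p)\bigl[Src_k(1-F_k(p))-Src_x\bigr]>0$ near that infimum, a contradiction whether or not $p$ lies in $x$'s support. One of these two repairs must be supplied; the remaining steps (reduction to the functional forms of Proposition \ref{singlefirm_ex}, $F_j<F_m$, the same $F_j$ below any cutoff, the profit-per-store claims, and the reserve-price check via Lemma \ref{SearcherCondLemma}) match the paper's treatment.
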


The calculation of the equilibrium distribution is done in similar lines to the proposition \ref{singlefirm_ex}

\section{Examples}

The theorems described how to look for such NE, but without examples it might be harder to perceive. Here I provide two examples for Asymmetric NE, one for the original Stahl Model and one for the extended Model. The examples suggest a wide class of NE, where some sellers select the reserve price purely, whereas all other sellers select the symmetric NE strategy of the model with the remaining number of searchers. 

\subsection{Original Model Example}

Consider the Stahl model with 3 sellers and a shoppers fraction of 1/4.

The following asymmetric NE exists:
\begin{itemize}
\item The searchers have a reserve price of $P_M=c/(1-ln2)>c$
\item One of the sellers offers the reserve price as a pure strategy.
\item The other two sellers use an atomless distribution function $F(p)=2-P_M/p$ on $[P_M/2,P_M]$
\end{itemize}

Note that $1/4$ is the mass of searchers visiting each of the stores initially.

The pure str. agent receives the profit of $P_M/4$.

Suppose the mixed str. agent selects a price $p \in [P_L,P_M)$. Then, her expected profit would be:
\begin{equation}
p(\frac{1-F(p)}{4}+\frac{1}{4}) = \frac{p}{4}(2-F(p)) = \frac{p}{4}\frac{P_M}{p} = \frac{P_M}{4}
%p((1-F(p))/4+1/4)=p/4(2-F(p))=p/4(P_M/p)=P_M/4
\end{equation}
Clearly, if the pure str. agent selects a price in $(P_L,P_M)$ her prob. to sell to shoppers is $(1-F(p))^2 <(1-F(p)$, and therefore, such deviation is not profitable. Similarly, selecting $P_L$ would lead to the same profit as selecting $P_M$.

Any agent selecting prices above $P_M$ would not sell to anyone, and selecting a price below $P_L$ yields less profit.

One last thing to check is the searcher condition. Sufficient for this would be to check that the expected price of the mixed str. seller is at least $P_M-c$. 

The density function, which is the derivative of the distribution function, is $P_M/p^2$. Therefore, the expected value is:
\begin{eqnarray}
E = \int_{P_M/2}^{P_M} pf(p) = \int_{P_M/2}^{P_M} P_M/p = P_M(ln(P_M/P_L)) = P_M(ln2)
\end{eqnarray}

Thus, the expected price of a mixing seller, $E=P_M ln2$. Since $P_M(1-ln2)=c$, it is easy to see that $P_M - E = c$, or $P_M-c=E$ as required.

If a searcher did not observe the price of $P_M$ but a lower one, she know that she had encountered a mixed price agent. Additional search will yield with prob. 0.5 another mixed agent with expected price of $P_M-c$, or prob. 0.5 of a pure agent and price $P_M$. Combined - expected price in an additional search is $P_M-c/2$, making the additional search not profitable after observing a price below $P_M$, due to the search price $c$.

If a searcher observed a price of $P_M$ she know that she encountered a pure str. seller, and if she searches further she will get the expected price of $P_M-c$. Here the searcher is indifferent whether to search on or not. Therefore, it is an equilibrium.

%\begin{remark}
%The example before raises an interesting point. Note that the two sellers who mix are playing the symmetric NE of the Stahl model for two sellers. Consider a situation where %would be an incumbent opening spontaneously a new store, without the older sellers and the searchers awareness.

%The two older sellers cannot adjust their strategy, as they are not aware of the new incumbent joining in. The reserve price did not change from the two seller case, as it was %the expected price of a seller plus $c$. Therefore, in the case above the best strategy for such a spontaneous incumbent would be to set the price purely at the level of the %reserve price.

%Then, we end up in an asymmetric NE of the Stahl model with three sellers, after a spontaneous appearance of a new incumbent.
%\end{remark}

\subsection{Extended Model Example}

I take a similar example to the symmetric model. Again, with 3 sellers, but now only 1/6 are shoppers. 1/2 of the consumers are searchers initially visiting one of the stores, and 1/6 of the consumers are searchers initially visiting each of the two others. The corresponding number of stores is, for example, $3,1,1$.

The following asymmetric NE exists:
\begin{itemize}
\item The searchers have a reserve price of $P_M=c/(1-ln2)>c$
\item The seller with the larger store number offers the reserve price as a pure strategy.
\item The other two sellers use the same atomless distribution function on $[P_M/2,P_M]$
\item The distribution function for the two mixing sellers is $F(p)=2-P_M/p$.
\end{itemize}

The searchers condition is analogous to the symmetric model, and therefore, the reserve price would be indeed rational.

One needs to check that no seller wishes to deviate. Firstly, note that prices of above $P_M$ or below $P_L=P_M/2$ are not profitable for all sellers. Already at $P_L$ there is a prob. 1 to sell to shoppers and there is no need in a further discount. Prices above $P_M$ would cause a quantity sold to be 0.

The profit for the pure str. seller is $P_M/2$ and for the other two is $P_M/6$ when offering the price $P_M$.

The profit of the mixed str. seller when she offers a price $p \in [P_L,P_M)$, is as follows:
\begin{equation}
p(\frac{1}{6}+\frac{1-F(p)}{6}) = \frac{p}{6}(2-2+\frac{P_M}{p}) = \frac{P_M}{6}
\end{equation}
Therefore, the mixing agents are indeed indifferent between the prices in the interval.

Lastly, one needs to show that the pure str. agent would not deviate to a lower price. His profit when offering a lower price $p$ is:
\begin{equation}
p(\frac{1}{2}+\frac{(1-F(p))^2}{6}) \leq p(\frac{1}{2}+\frac{1-F(p)}{6}) < p(\frac{1}{2}+\frac{1-F(p)}{2}) = \frac{P_M}{2}
\end{equation}
The first inequality due to $F(p)$ being between 0 and 1. The second strict inequality due to $1-F(p)$ being strictly positive when $p<P_M$. The third equality is algebraic.

It is clearly visible that the pure str. agent has no incentive to deviate to a different str., and therefore it is a NE.

An interesting point to note is that the average seller profit in this example is higher than in the previous one. Here, the three sellers together receive an expected profit of $(1/2+1/3+1/3)P_M=5/6P_M$ whereas in the previous example they got only $3/4P_M$. The reason behind it is simple: more searchers (half of them) bought at the reservation price in the asymmetric case, in comparison to 1/4 of them in the symmetric case. Clearly, if all but the smallest sellers offer the high price, the fraction of buyers buying at that price will be higher.

\subsection{Additional Examples}
It is possible to construct additional examples for the original model as follows: Add to a symmetric NE setting an additional seller that charges purely the reserve price. Then, by adjusting the searchers fractions, similarly to the example provided above a NE will be obtained. For the extended model exists a NE where the sellers with the lowest store number ignore the searchers visiting one of the larger sellers and obtain among them a symmetric (for example) NE, and the other sellers set a pure strategy of $P_M$. The way to show that the profiles are NE are similar to the way that the examples above were shown, for example, by having more than one agent selecting the reserve price as a pure strategy, or a seller having a cutoff price.

\section{Discussion}

Here is a short discussion over the models and result. first, the structure and economic motivation on the results is provided. Afterward, a couple of situations are shown where importance of the extension becomes clear. Then, a couple of empiric tests are suggested to verify whether the results hold in the lab. Lastly, some points to future research are suggested.

\subsection{NE Structure}

The structure of the NE is in one line with the results in \cite{AsymmetricSearch} and \cite{Monopolist}. There, a single firm has a larger size, and this firm will charge the reserve price purely. 'However, this paper suggests a more wide setting and more possibilities for firm sizes. In this extension we see that indeed larger firms find no incentive to compete on shoppers. The reason behind it can be seen easily form the profit structure. The profit of seller $i$ is as follows:
\begin{equation}
\pi_i(p)=p\alpha_i(p)+p(1-\mu)n_i/N
\end{equation}

The profit consists of two components - expected profit form shoppers and profit from searchers. Setting a lower price has two effects - on one side it increases the probability to attract shoppers, but on the other it reduces the profit from the searchers. Note that the first, positive, effect is more size independent (probability to be cheapest increases similarly no matter your size), whereas the second effect is size defendant and is more significant for larger sellers. Therefore, a larger seller will find it less attractive to offer discounts. In the case of at least two smallest sellers, these will compete one with the other, and larger sellers will not even bother to enter the 'shoppers market', by sticking to the reserve price. In the case of a unique smallest seller, the competition will be with some 'second smallest' sellers, and sellers above that 'second smallest' size will stay out.

The three types of strategies in the NE have some economic motivation. The mixing seller wishes to compete over the shoppers when the pure reserve price seller does not to bother with the shoppers. Those kind of behavior are common in the economic world, and not in all cases all will compete as predicted by the symmetric NE. If only a single seller decides to compete, she will have monopolistic profits, which would attract additional competitors, and therefore, in NE at least two sellers will compete for the shoppers. As suggested in \cite{AsymmetricSearch}, in the case of a smallest unique seller (their setting is of two seller with different size), the smaller seller will offer lower prices with higher probability.

The cutoff price is for sellers that do not wish to be bothered with small probabilities. There are several effects that may cause a seller to refrain from sufficiently low probability events, for example see \cite{RareEvent}. Then, such seller will compete for shoppers, but only at prices that yield the benefit of getting the shoppers from high enough probability. When the probability to attract shoppers is lower than this individual threshold, the seller prefers to refrain from the shoppers market and select the reserve price purely.

These three strategies in addition to the size implication explain the behavior of sellers in this model.

\subsection{Examples}
First example uses the data from Table 2 in from \cite{HungMilk}. This paper discusses the  pricing of a homogeneous good (milk) in 8 discounters in Hungary over a period of 5 years (2004-2008). The stores number data was not available in the paper, and current stores number was obtained (Jan 2012), with two exceptions. One of the supermarket chains, PLUS, was purchased by another, InterSpar, after the relevant period. Therefore, from the current number of stores by InterSpar the number of PLUS stores (~170 stores) was subtracted. The current number of stores serves as an indicator to the number of stores in the research period, and is divided into several distinct groups by size. This provides enough insight, and the idea that smaller chains are usually cheaper. Note that there can be additional factors (such as location of stores) affecting the price, however, one factor can indeed be the chain size. The table is as follows:

\begin{tabular}{c c c}
Chain Name & Stores Number in Hungary & Avg. price of milk \\
InterSpar	& 50-100\footnote{Excluding later bought Plus Stores} & 182 HUF \\
Cora	& Below 20 & 198 HUF \\
Match	& Below 20 & 200 HUF \\
Tesco	& 200-400 & 205 HUF \\
Auchan & Below 20	& 211 HUF \\
CBA	& Above 500	& 213 HUF \\
Plus\footnote{Got Bankrupt and bought by InterSpar} & 100-200 & 230 HUF \\
COOP & Above 500 & 240 HUF 
\end{tabular}

Another example is from drogerie stores in Germany. GKL\footnote{\tiny{Conducted in Jan. 2012, comparing prices in 1700 stores in Germany}} research found that Schlecker is 10-20\% more expensive than competitors. Additionaly, the number of stores in January 2012 of the various chains in Germany and Europe is given in figure \ref{schlecker}. Again we see a tendency that the largest chain is the more expensive one. Again, additional factors may have an effect here, but it seems that chain size plays a role when determining prices.

\begin{figure}
 \centering
    \includegraphics[width=60mm]{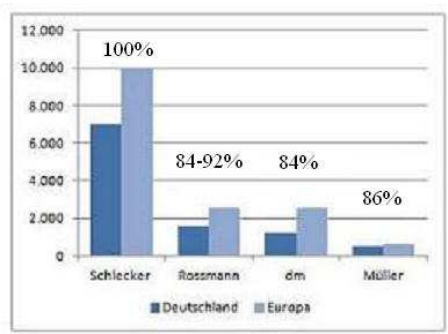}
  \caption{Consumers and their initial visit}
  \label{schlecker}
\end{figure}

To conclude, it seems that there should be some positive correlation between chain size and price. An empiric research checking this connection explicitly will be able to determine how strong it is.

\subsection{Empiric tests and Policy Suggestions}

The structure of NE allow to run several empiric tests on a database containing pricing and chain size data. Firstly, sellers may play an asymmetric NE, and the results here suggest some differences from the classical Stahl Model. Firstly, there will be a higher probability for reserve price. In any asymmetric NE some sellers select the reserve price with a mass point. This implies that the reserve price will be more commonly selected. Similarly, larger discounts will be more rare, as the reserve price will be more common. When sellers with different size are examined, one should see correlation between chain size and price. Moreover, in a dynamic setting there should be some price sticking, as mass points exist. 

When examining the probability for a consumer to encounter a lower price, it is clearly visible that the larger is the variance in store sizes, and the rarer the smallest stores are, the closer expected price paid is to the reserve price. This is an additional factor to examine, and it suggests an interesting policy decision. If the regulator wishes to reduce goods prices, reducing the variance among the selling firms can reduce the price, as exists a NE where less sellers select the reserve price purely.

\subsection{Future Research}

The results here open several important questions, which leave place for a fruitful future research. Firstly, the assumption here is that a reserve price exists. There may be additional NE without a reserve price, and an interesting question is whether such exist and how do these look like. This will allow to fully characterize all NE of the model and fully explain behavior of sellers. An additional question is combined with the determination of the reserve price. What is the full set of reserve prices under a certain setting, as here only a lemma provides a sufficient condition for the rationality of it. Moreover, which reserve price will the consumers set in order to minimize their price. On the other hand - with which NE should the sellers respond. What is the best NE for sellers and what is the best NE for consumer, will sellers prefer to mix, or to have a specific cutoff price? This question of consumer welfare and seller welfare will provide an important insight on behavior of these groups, and can provide a policy decision for a regulator in order to set the price lower or higher.

The Stahl model is a very important tool and the model is being used and applied in numerous papers. I hope that this paper provides an additional important insight which will make the Stahl model more applicable and more realistic. Additionally, any of the further research topics suggested here will provide yet another important block to the model, and to explaining behavior of consumers and sellers.

\begin{appendix}

\section{Symmetric Model}

Here I show the proof to theorem \ref{symmod}. This is shown in a sequence of lemmas, first dealing with the regular Stahl model and then dealing with the extended model.

\subsection{Mass Points and Highest offered Price}

\begin{lemma}
There are no mass points at any price that can attract shoppers with positive prob.
\end{lemma}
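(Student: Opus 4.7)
The plan is the classical undercutting/overshooting argument for Bertrand-type mixed strategies. Suppose for contradiction that some seller $i$ has a mass point of size $a_i>0$ at a price $p$ such that $\prod_{k\neq i}\Pr(P_k\geq p)>0$. I will derive a profitable deviation for \emph{someone}: either a rival $j$ wants to undercut, or $i$ itself wants to raise the mass into a gap just above $p$.

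First I would show no other seller $j\neq i$ also has a mass point at $p$. If both did, then at price $p$ seller $j$ only obtains a fractional share of shoppers on the positive-probability tie event, while a deviation to $p-\varepsilon$ captures the whole $\mu$-mass of shoppers on the event that everyone else (including $i$'s atom) charges $\geq p$. The resulting jump in expected shopper revenue is at least of order $a_i\prod_{k\neq i,j}\Pr(P_k\geq p)$, which is bounded away from $0$ by the shopper-attraction hypothesis, whereas the price loss is the arbitrarily small $\varepsilon$. So $j$ strictly gains, contradicting equilibrium.

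Next I would establish a gap: there exists $\delta>0$ such that no $j\neq i$ puts any probability mass in $(p,p+\delta]$. Take any price $q>p$ slightly above $p$ in the support of some $j$. At $q$ the winning probability is $\prod_{k\neq j}\Pr(P_k>q)$, which as $q\downarrow p$ tends to $\prod_{k\neq j}\Pr(P_k\geq p)=\Pr(P_i\geq p)\cdot\prod_{k\neq i,j}\Pr(P_k\geq p)$. A deviation to $p-\varepsilon$ replaces $\Pr(P_i\geq p)$ by $\Pr(P_i>p-\varepsilon)\geq\Pr(P_i\geq p)+a_i$, a discrete jump. Since the jump is bounded below by $a_i\prod_{k\neq i,j}\Pr(P_k\geq p)>0$ and the price decrease is negligible, $j$ strictly prefers $p-\varepsilon$ to any $q$ sufficiently close to $p$ from above. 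Hence a uniform gap exists.

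The contradiction then falls out of raising, not lowering. Seller $i$ moves the mass from $p$ to $p+\delta/2$. By step~1, no rival has mass at $p$; by step~2, no rival has any support in $(p,p+\delta]$. Therefore $\prod_{k\neq i}\Pr(P_k>p+\delta/2)=\prod_{k\neq i}\Pr(P_k>p)=\prod_{k\neq i}\Pr(P_k\geq p)$, so the probability of attracting shoppers is unchanged, while the price is strictly higher. Provided $p+\delta/2\leq P_M$ the searchers visiting $i$ still buy, so their contribution also scales up: strictly positive gain, contradiction. The boundary case $p=P_M$ is handled separately by noting that positive shopper attraction at $P_M$ together with lemma~\ref{max_res_lemma} forces some rival to have a mass at $P_M$, at which point step~1 alone delivers the contradiction.

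The main obstacle is bookkeeping: keeping the strict/weak inequalities straight in the events $\{P_k>p\}$ versus $\{P_k\geq p\}$, and making sure the probability gaps I exploit are bounded away from zero (rather than merely infinitesimal) uniformly in the deviation size $\varepsilon$. The shopper-attraction hypothesis $\prod_{k\neq i}\Pr(P_k\geq p)>0$ is exactly what guarantees that $\prod_{k\neq i,j}\Pr(P_k\geq p)>0$ in each of the estimates above, so the whole argument hinges on leveraging it at every step.
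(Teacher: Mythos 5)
Your proof is correct and follows essentially the same route as the paper: simultaneous atoms are killed by undercutting, and a single atom creates a gap just above it into which the atom's owner can profitably shift her mass (the paper states exactly this, only far more tersely, and you additionally spell out the $p=P_M$ boundary case). Two small bookkeeping slips --- as $q\downarrow p$ one has $\Pr(P_k>q)\to\Pr(P_k>p)$, not $\Pr(P_k\geq p)$, and the correct jump bound is $\Pr(P_i>p-\varepsilon)\geq\Pr(P_i\geq p)=\Pr(P_i>p)+a_i$ rather than $\Pr(P_i\geq p)+a_i$ --- do not affect the validity of the discrete gain of order $a_i\prod_{k\neq i,j}\Pr(P_k\geq p)>0$ that your argument exploits.
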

If at price $q$ there is a mass point by a single seller $i$, price just above it is strictly less profitable for all others, and therefore would not be selected, as there the chance to attract shoppers drops discontinuously. Thus, seller $i$ can set the mass point higher and gain more profit. In the case of mass points by several sellers at price $p$ undercutting is possible, which probability to attract shoppers discontinuously. Therefore, there are no mass points at prices that can attract shoppers. \qed

\begin{lemma}
All sellers select $P_M$ as the supremum point of their strategy support.
\end{lemma}

From lemma \ref{max_res_lemma} it cannot be higher than $P_M$.

Suppose that the supremum price of seller $i$ is $p<P_M$. For any price above $p$ and below $P_M$ the probability to sell to shoppers is 0. Therefore, in equilibrium no seller would select a price in $(p,P_M)$. Additionally, suppose that seller $i$ has the lowest support supremum.

All sellers cannot have a mass point at $p$, as in such case undercutting would be profitable. From previous lemma seller $i$ has no mass point at price $p$. Thus, probability to sell to shoppers at price $p$ is 0, for all other shoppers, and no other seller would have a mass point at this price. Therefore, a deviation exists to seller $i$, where $i$ selects prices arbitrarily close to $P_M$ instead of prices arbitrarily close to $p$ is profitable. \qed

\begin{remark}
Note that this implies equal profit to all sellers in any equilibrium, or all but one have equal and one higher. 

If at least two sellers do not have a mass point at $P_M$ the probability that shoppers buy at $P_M$ is 0. Moreover, if only one seller has no mass point at $P_M$ she has weakly higher profit than all other sellers.
\end{remark}

The two lemmas combined imply that there can be no mass points at any price except for $P_M$.

\subsection{Single Interval and Profit Equivalence}

\begin{definition}
Let $\alpha_i(p)$ be denoted as the probability that $p$ is the cheapest price, if seller $i$ selects it. Explicitly: what is the probability of seller $i$ to sell to shoppers given she selects price $p$. As the distribution is atomless except (maybe) $P_M$, one can define $\alpha_i(p)$ as the product of 'Probability that seller $j$ sets price above p', which is denoted as $\beta_j(p)$. 
Formally:
\begin{eqnarray}
\beta_j(p) = 1-F_j(p)\\
\alpha_j(p) = \prod_{j\neq i} \beta_j(p)
\end{eqnarray}

\end{definition}

\begin{lemma}
Exists an interval $I$ such that the union of the seller strategies is contained in $I$ and dense in it.
\end{lemma}

Suppose exists an interval $[a,b]$ ($a<b<P_M$) such that sellers select prices only below $a$ and above $b$, and exist prices both below $a$ and above $b$. Let $p-$ be the highest price below $a$ that is in the support union of the sellers. A seller can deviate from $p-$ and prices just below it to $b$, and sellers arbitrarily close to all of her previous quantity: 

The searchers behavior does not change, as the prices are below $P_M$. Since the probability for someone to select a price just below $p$ is arbitrarily small, the decrease in probability to sell to shoppers is arbitrarily small. 

The profit form raising the price is much higher than such arbitrarily small loss, as it is at least $(b-p)(1-\mu)/n$, as the searchers pay strictly more after the deviation. Therefore, if the support is not continuous there is a profitable deviation. \qed

\begin{corollary}
Exists an interval $I=[P_L,P_M]$, such that any NE strategy profile the sellers randomize continuously over $I$, and possibly some sellers set mass points at $P_M$.
\end{corollary}

\begin{lemma}
The previous lemma holds also for two sellers. Meaning - any interval has a non empty intersection with the support of at least two sellers.
\end{lemma}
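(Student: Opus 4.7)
The plan is to argue by contradiction, strengthening the previous lemma. Suppose there is a subinterval $J = [a,b] \subseteq [P_L, P_M)$ such that the supports of all sellers intersect $J$ in points coming from at most one seller. If no seller has support meeting $J$, this immediately contradicts the previous lemma's density conclusion. So assume exactly one seller $i$ has support touching $J$, while every $j \neq i$ has support disjoint from $J$.

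For each such $j$, the distribution $F_j$ is constant on $J$, so $\beta_j(p) = 1 - F_j(p)$ equals a constant $c_j$ throughout $J$. Hence
\begin{equation}
\alpha_i(p) = \prod_{j \neq i} \beta_j(p) = \prod_{j \neq i} c_j
\end{equation}
is constant on $J$. The profit of seller $i$ at $p \in J$ is therefore $\pi_i(p) = p \left[ \mu \alpha_i + (1-\mu)/n \right]$, which is strictly increasing in $p$. By the earlier mass-point lemma, $i$ has no atom inside $J$ (atoms can only sit at $P_M$), so $i$ must assign positive probability to prices in some sub-subinterval of $J$ strictly below $b$. Shifting that mass to prices arbitrarily close to $b$ strictly increases $i$'s profit, contradicting the equilibrium condition.

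The step that needs the most care is verifying that this upward shift really is a feasible and profitable deviation. Two things must be checked: first, that the searchers' behavior is unchanged, which is immediate because every price in $J$ lies below the reserve price $P_M$, so a searcher landing at $i$ buys at the first draw in either scenario; second, that the shopper-attraction probability does not drop, which holds precisely because no other seller's support intersects $J$, so $\alpha_i$ really is invariant under the shift. Both conditions follow from hypotheses already established, and the contradiction is complete. The corollary is that every subinterval of $I$ must meet the supports of at least two distinct sellers.
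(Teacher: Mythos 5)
Your proof is correct and follows essentially the same route as the paper's: the lone seller in the interval faces a constant shopper-attraction probability there, so shifting her mass toward the top of the interval is a strictly profitable deviation (the paper phrases this as moving to an atom at the upper endpoint, you as moving to prices arbitrarily close to it, which is the same idea). Your explicit checks that searcher behavior is unchanged and that $\alpha_i$ is invariant under the shift just make the paper's one-line deviation argument more careful.
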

Suppose that all points in an interval $[p,p']$ ($p<'p<P_M$) are selected at most by one seller. Additionally, from previous lemma this seller needs to have in support the entire interval. Than exists a profitable deviation for her would be to set a mass point at $p'$ instead of selecting the original distribution over the interval.
\qed

\begin{corollary}
Any interval between $P_L$ and $P_M$ has points in the support of at least two sellers.
\end{corollary}

\begin{lemma}
All sellers have the same profit. 
\end{lemma}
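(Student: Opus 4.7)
The plan is to show every seller earns the common profit $P_M(1-\mu)/n$ by computing equilibrium profits at the two endpoints $P_L$ and $P_M$ of the support union and matching them through the indifference condition.

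First I would evaluate profit at $P_L$ for any seller $i$ that has $P_L$ in her support. Since the earlier lemmas forbid mass points below $P_M$ and $P_L$ is the infimum of the support union, every other seller prices strictly above $P_L$ almost surely, so $\beta_j(P_L)=1$ for all $j\neq i$ and hence $\alpha_i(P_L)=1$. The profit at $P_L$ is then $P_L[(1-\mu)/n+\mu]$, which simplifies to $P_M(1-\mu)/n$ by the defining identity of $P_L$.

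Second I would evaluate profit near $P_M$ using $\alpha_i(p)=\prod_{j\neq i}\beta_j(p)$. As $p\to P_M^-$, $\beta_j(p)\to a_j$, the mass of seller $j$ at $P_M$ (zero if $j$ has none). Thus on the density portion of her support near $P_M$, seller $i$'s profit tends to $P_M[(1-\mu)/n+\mu\prod_{j\neq i}a_j]$; and if $i$ has a mass point at $P_M$ her profit there equals $P_M[(1-\mu)/n+(\mu/n)\prod_{j\neq i}a_j]$, since shoppers reach $P_M$ only through the all-tied event. Equating either formula to the first-step value $P_M(1-\mu)/n$ — which I can do for every seller having both $P_L$ in her support and either density near $P_M$ or a mass at $P_M$ — forces $\prod_{j\neq i}a_j=0$, i.e.\ some seller other than $i$ lacks a mass at $P_M$.

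Third I would invoke the previous corollary to conclude that at least two sellers $k_1,k_2$ have $P_L$ in their support. Applying step two to each yields two constraints $\prod_{j\neq k_1}a_j=\prod_{j\neq k_2}a_j=0$. A short case analysis, supplemented by a deviation argument for the borderline configuration in which both implications point to a single common no-mass seller, shows that at least two distinct sellers must satisfy $a_j=0$. Consequently, for every seller $i$ the product $\prod_{j\neq i}a_j$ contains at least one zero factor, and every seller's profit — computed either via the density formula, the mass-point formula, or the $P_L$ evaluation — collapses to the common value $P_M(1-\mu)/n$.

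The main obstacle is that last sub-step: excluding the configuration in which only a single seller lacks a mass at $P_M$ while every other seller has $a_j>0$. In that scenario the unique no-mass seller would earn strictly more than her rivals, and one must rule it out by constructing a profitable deviation — for instance, some mass-at-$P_M$ seller shifting a sliver of probability to a price just below $P_M$ in order to capture shoppers with positive probability against the no-mass seller's density there.
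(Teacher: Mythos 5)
Your plan has two genuine problems. First, step one is circular: at this stage of the argument ``$P_L$'' can only mean the infimum of the support union, whereas the identity $P_L\left(\mu+\tfrac{1-\mu}{n}\right)=P_M\tfrac{1-\mu}{n}$ is exactly what the paper derives \emph{after} profit equality is established, from the indifference of a seller who has both the bottom of the support and (prices near) $P_M$ in her support with no shopper sales there. Without that identification, evaluating profit at the infimum $p_0$ only gives $p_0\left[\mu+(1-\mu)/n\right]$, and equating it with the profit near $P_M$ yields $p_0\left[\mu+(1-\mu)/n\right]=P_M\left[(1-\mu)/n+\mu\prod_{j\neq i}a_j\right]$, which pins down $p_0$ in terms of the masses but forces nothing to vanish; so the conclusion of your step two, $\prod_{j\neq i}a_j=0$, does not follow as stated.

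Second, and decisively, the configuration you yourself flag as the ``main obstacle'' (exactly one seller $i$ without a mass at $P_M$, every rival with $a_j>0$) is the only case that actually needs an argument: sellers with a mass at $P_M$ automatically earn $P_M(1-\mu)/n$ because seller $i$ undercuts them with probability one at $P_M$, and if two or more sellers lack a mass all the products vanish and profits coincide. Your proposed resolution --- a mass-at-$P_M$ rival shifting a sliver of probability to just below $P_M$ --- does not yield a contradiction: at $p=P_M-\varepsilon$ that rival's chance of winning the shoppers is proportional to $1-F_i(p)$, and the indifference condition of whichever seller does have density just below $P_M$ forces $1-F_i(p)$ to vanish at rate $P_M-p$, so the shopper gain is of the same order as the searcher loss $\varepsilon(1-\mu)/n$ and the sliver deviation is (weakly) unprofitable. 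The paper instead closes this case at the \emph{bottom} of seller $i$'s support: letting $p_i$ be the infimum of her support, $F_i(p_i)=0\le F_j(p_i)$ implies that any rival $j$ charging $p_i$ earns weakly more than $i$ earns there (and if $p_i$ is the infimum of the whole support union, any rival can match $i$'s profit exactly by charging it), hence $\pi_j\ge\pi_j(p_i)\ge\pi_i(p_i)=\pi_i$, contradicting seller $i$'s strict profit advantage. You need this bottom-of-support comparison (or an equivalent argument); the top-of-support deviation cannot substitute for it.
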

The only case that needed to be shown is as follows: If $n-1$ sellers have the same profit, the other seller cannot have a profit above them. It was shown before that if at least two sellers do not have mass points at $P_M$ all sellers have equal profit. If only one seller has no mass point at $P_M$ then she must have a higher profit. This is since she can always deviate to a pure strategy offering $P_M$.

Suppose seller $i$ is the only seller who does not offer a mass point at $P_M$.
Let $p_i$ be the lowest (infimum if needed) price in the support of $i$. As it has a higher profit than all other players this price cannot be the lowest price in the support union. Note that due to previous lemmas seller $i$ sets no mass point at $p_i$, $F_i(p_i)=0$. If no seller selects a price below $p_i$ then it is not possible for seller $i$ to have a higher profit than other sellers, as other sellers could get the same profit as $i$ gets with $p_i$. Denote a seller $j \neq i$, and examine the profits of seller $i$ and $j$. As noted before, $\pi_i > \pi(j)$.

The profit of seller $i$ offering $p_i$ is (remember all searchers visit exactly one store):
\begin{equation}
\pi_i(p_i)=p_i((1-F_j(p)\prod_{k \neq i,j}(1-F_k(p_i))\mu+(1-\mu)/n)
\end{equation}
The profit of seller $j$:
\begin{equation}
\pi_j(p_i)=p_i((1-F_i(p)\prod_{k \neq i,j}(1-F_k(p_i))\mu+(1-\mu)/n)
\end{equation}
Since $0=F_i(p_i) \leq F_j(p_i)$ the profit of $j$ when offering $p_i$ is weakly higher than the profit of $i$ when offering $p_i$. This contradicts the fact that seller $i$ must have a higher profit than seller $j$. \qed

I have shown that the support union is equal to some interval $I$. Let $P_L$ be the lowest price in this interval. As the mixing sellers need to be indifferent between all the strategies they mix one can say that:
\begin{equation}
P_L = \frac{(1-\mu)/n}{\mu+(1-\mu)/n}P_M
\end{equation}
Clearly, the searchers do not search at this price, as $P_L$ is the cheapest price that can exist in EQ. Additionally, when a seller selects this price she is certain to sell to shoppers.

\subsection{Symmetry}

In this subsection I will discuss the symmetries in the NE distribution functions, and see where they can differ.

\begin{lemma}
The following inequality needs to be satisfied for any $p \in (P_L,P_M)$:
\begin{equation}\label{symNEineq}
p(\alpha_i(p)\mu+ \frac{1-\mu}{n})\leq P_M(\frac{1-\mu}{n})
\end{equation}\end{lemma}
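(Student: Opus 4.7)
The plan is to reinterpret the left-hand side of \eqref{symNEineq} as the expected profit seller $i$ would earn by deviating to the pure price $p$, and then to invoke the Nash-equilibrium no-deviation condition together with the common equilibrium profit identity already established in the preceding lemmas.

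First I would fix a seller $i$ and consider the unilateral deviation in which $i$ chooses $p \in (P_L, P_M)$ as a pure strategy. Holding every other seller's strategy and the searchers' reserve price $P_M$ fixed, seller $i$'s expected demand splits into two pieces. Searchers arrive at the first-visit stage uniformly over the $n$ stores, so a mass $(1-\mu)/n$ reaches $i$ initially; since $p < P_M$ and $P_M$ is the reserve price, Lemma \ref{max_res_lemma} together with the reserve-price rule ensures each such searcher buys on the spot rather than continuing to sample. The shopper contribution is $\mu\cdot\alpha_i(p)$ by the very definition of $\alpha_i(p)$. Multiplying the total expected quantity by the price yields exactly $p(\alpha_i(p)\mu+(1-\mu)/n)$, the left-hand side of \eqref{symNEineq}.

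Next I would invoke the preceding lemma which showed that every seller earns the same equilibrium profit. Since at least two sellers mix atomlessly on $[P_L,P_M]$ and must be indifferent across all prices in their support, their common payoff is pinned down by the payoff at $P_M$ itself (which attracts no shoppers in expectation and yields $P_M(1-\mu)/n$); equivalently, by the defining equation of $P_L$, the payoff at $P_L$, where the probability of being cheapest equals one, also equals $P_M(1-\mu)/n$. Hence $\pi_j = P_M(1-\mu)/n$ for every seller $j$, including $i$. By the definition of a NE, $i$ has no profitable deviation, so the deviation profit computed above cannot exceed $\pi_i$, which gives precisely the claimed inequality.

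The one subtlety to guard against is that the searchers' behavior might change in response to the deviation. This is ruled out by the Anonymous Knowledge setup: the reserve price is fixed simultaneously with the sellers' strategies and depends only on equilibrium beliefs, so a unilateral deviation by $i$ does not alter it; combined with $p<P_M$, this confirms that the searcher demand at $i$ after the deviation is indeed $(1-\mu)/n$, and the argument goes through without further obstruction.
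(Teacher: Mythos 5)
Your proposal is correct and matches the paper's own argument: both read the left-hand side of \eqref{symNEineq} as seller $i$'s profit from charging $p$ (searcher mass $(1-\mu)/n$ plus shoppers with probability $\alpha_i(p)$) and note that if it exceeded the common equilibrium profit $P_M(1-\mu)/n$, the seller would profitably deviate, contradicting NE. Your extra remarks on the reserve price being fixed under a unilateral deviation and on equality holding when $p$ is in the support are consistent with, and implicit in, the paper's proof.
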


If it is strictly larger than price $p$ will be more profitable than $P_M$, which due to previous lemmas cannot occur in NE. Moreover, if seller $i$ selects price $p$ there must be an equality, as the profit $i$ gets from any price she selects has to be equal to $P_M(\frac{1-\mu}{n})$.
\qed

The following observation will be crucial in understanding the asymmetric NE:
\begin{corollary}
Only the seller(s) with the maximal $\alpha$ among the sellers may select the corresponding price.
\end{corollary}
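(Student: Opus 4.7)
The plan is to read the corollary as an immediate consequence of the profit-equivalence condition encoded by the preceding lemma. That lemma says that for every seller $i$ and every price $p \in (P_L, P_M)$ one has $p(\alpha_i(p)\mu + (1-\mu)/n) \le P_M(1-\mu)/n$, with equality whenever $p$ lies in the support of $i$'s strategy (otherwise $i$ could deviate to $P_M$ and strictly improve her payoff). The corollary then follows by comparing this equality for a seller who actually selects $p$ against the weak inequality every other seller must simultaneously satisfy at the same $p$.

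Concretely, I would fix a seller $i$ and a price $p$ in the support of her strategy, so that
\begin{equation}
p\Bigl(\alpha_i(p)\mu + \tfrac{1-\mu}{n}\Bigr) = P_M\tfrac{1-\mu}{n}.
\end{equation}
For any other seller $j$, the lemma supplies
\begin{equation}
p\Bigl(\alpha_j(p)\mu + \tfrac{1-\mu}{n}\Bigr) \le P_M\tfrac{1-\mu}{n}.
\end{equation}
Subtracting these two relations and cancelling the common additive term as well as the positive factor $p\mu$ gives $\alpha_j(p) \le \alpha_i(p)$. Hence $i$ attains the pointwise maximum of $\alpha_{\cdot}(p)$ at $p$; and any other seller $j$ who also includes $p$ in her support must achieve equality in her own inequality, which by the same algebra forces $\alpha_j(p) = \alpha_i(p)$, so both are maximizers.

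The main obstacle is little more than keeping track of where the argument applies: the previous lemmas confined support to $[P_L,P_M]$ and ruled out mass points at any price that attracts shoppers with positive probability, leaving $P_M$ as the only possible atom. At an interior $p$ the quantities $\beta_j(p) = 1 - F_j(p)$ are well-defined and so $\alpha_i(p) = \prod_{k \ne i} \beta_k(p)$ makes sense pointwise; at $p = P_M$ the same expression still goes through using $F_j(P_M)$, though the claim there is essentially vacuous since multiple sellers may share the atom. One should also note that the argument uses only that $p$ contributes positively to $i$'s expected profit, so it covers both density points and potential mass points of $F_i$. No further machinery is needed beyond the lemma and the observation that mixing sellers are indifferent across their support.
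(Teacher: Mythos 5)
Your argument is correct and is essentially the paper's own: the preceding lemma gives the weak inequality for every seller at every $p\in(P_L,P_M)$ and equality for any seller whose support contains $p$, and comparing the two immediately forces the selecting seller to have the maximal $\alpha_i(p)$. The paper treats the corollary as an immediate consequence of exactly this equality-versus-inequality comparison, so no difference in approach.
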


Note that since there are no mass points $\alpha$ and $\beta$ change continuously, except possibly at $P_M$. Note that at $P_M$ the $\alpha$ of each seller approaches 0 continuously as $P_M$ is approached, as the probability to sell to shoppers with price $P_M$ is 0. Adding the fact that there are no mass points below $P_M$, it is clear that $\alpha$ a continuous function.

\begin{lemma}
Let $I$ be an open interval in $[P_L,P_M]$. Suppose that seller $i$ selects a price in $I$ with some positive probability. Let $j$ be a different seller. Then, also seller $j$ must select a price from $I$ with same positive probability, or not to select any prices in or above the interval $I$, except $P_M$.
\end{lemma}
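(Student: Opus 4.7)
The plan is to combine the indifference inequality \eqref{symNEineq} from the preceding lemma with the product identity $\alpha_i(p)(1-F_i(p)) = \alpha_j(p)(1-F_j(p))$, whose common value is $\prod_{k}(1-F_k(p))$. Introduce the shorthand $g(p) := \frac{1-\mu}{n\mu}\!\left(\frac{P_M}{p}-1\right)$, so that $\alpha_k(p)\le g(p)$ always and $\alpha_k(p)=g(p)$ exactly where seller $k$ has positive density. Since $i$ selects a price in $I$ with positive probability, I may pass to a sub-interval of $I$ on which $F_i$ strictly increases and conduct the analysis there.

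If $j$ also selects from $I$ with positive probability, the first alternative of the lemma is satisfied and there is nothing to prove. Otherwise $F_j$ is constant on $I$, equal to some value $c_j$. On $i$'s support in $I$ the indifference gives $\alpha_i(p)=g(p)$, so the product identity yields
\[
\alpha_j(p)\;=\;g(p)\,\frac{1-F_i(p)}{1-c_j}\qquad(p\in I).
\]
Combining this with the non-deviation bound $\alpha_j(p)\le g(p)$ forces $F_i(p)\ge c_j$ throughout $I$, and strict monotonicity of $F_i$ then gives $F_i(\sup I^-) > c_j$.

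Now suppose for contradiction that $j$ selects some price in $(\sup I, P_M)$ with positive probability, and let $s$ denote the infimum of such prices. Since no seller has a mass point strictly below $P_M$, both $F_k$ and $\alpha_j$ are continuous at $s$, and $j$'s indifference just above $s$ yields $\alpha_j(s^+)=g(s)$. In the principal case $s=\sup I$, continuity of $\alpha_j$ at $s$ together with the displayed formula evaluated as $p\to\sup I^-$ gives
\[
g(\sup I)\,\frac{1-F_i(\sup I^-)}{1-c_j}\;=\;g(\sup I),
\]
so $F_i(\sup I^-)=c_j$, directly contradicting $F_i(\sup I^-)>c_j$. The residual case $s>\sup I$ is what I expect to be the main technical obstacle: on the gap $(\sup I, s)$ the distribution $F_j$ is still constant, and one must carry the product identity across this gap. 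I would handle it by invoking the earlier lemma that every open sub-interval meets the supports of at least two sellers, identify a seller whose support is active just above $\sup I$, and iterate the same continuity-at-the-endpoint argument on that seller until the contradiction with $F_i(\sup I^-)>c_j$ is forced.
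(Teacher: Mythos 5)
Your main construction is sound and, in substance, it is the paper's own argument written more explicitly: the paper's observation that only sellers with the maximal $\alpha$ may select a given price is your bound $\alpha_k(p)\le g(p)$ with equality on the support (this is inequality \eqref{symNEineq}), and your product identity $\alpha_i(p)(1-F_i(p))=\alpha_j(p)(1-F_j(p))$ is the paper's relation $\alpha_i/\beta_j=\alpha_j/\beta_i$. The derivation of $F_i\ge c_j$ on $I$, hence $F_i(\sup I^-)>c_j$, and the contradiction in the case $s=\sup I$ are correct; the passage to a sub-interval where $F_i$ strictly increases is harmless, since it only enlarges the set of prices "above the interval" that you must exclude, while prices inside the original $I$ are already excluded by the case assumption that $F_j$ is constant there.

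The genuine gap is the residual case $s>\sup I$, which is in fact the generic case and exactly the step the paper does carry out (its $\hat p$ is your $s$). Your proposed fix --- locating a third seller active just above $\sup I$ and "iterating" the endpoint argument --- is vague: you do not say how information about another seller's distribution feeds back into the inequality $F_i(\sup I^-)>c_j$, nor why the iteration terminates, and no such detour is needed. The tools already in your first paragraph close the case in one step evaluated at $s$ itself: on $(\sup I,s)$ seller $j$ selects nothing, so $F_j$ is still $c_j$, and since $s<P_M$ there are no mass points, giving $F_j(s)=c_j$ and, by continuity from $j$'s indifference on its support just above $s$, $\alpha_j(s)=g(s)$. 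The product identity at $s$ then yields $g(s)(1-c_j)=\alpha_i(s)(1-F_i(s))\le g(s)(1-F_i(s))$, i.e. $F_i(s)\le c_j$, while monotonicity gives $F_i(s)\ge F_i(\sup I^-)>c_j$, a contradiction. This is precisely the paper's remark that between $I$ and $\hat p$ the quantity $\beta_j$ stays constant while $\beta_i$ weakly decreases, so the inequality $\alpha_i>\alpha_j$ established on $I$ cannot have reversed by the time $j$'s support resumes, whereas $j$ selecting $\hat p<P_M$ would require $\alpha_j(\hat p)\ge\alpha_i(\hat p)$. With that step inserted, your proof is complete and coincides with the paper's.
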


From previous lemma it is known that each interval is selected by at least two sellers, and done so without mass points. Note that the only way to select elements continuously is to select a dense subset of an interval.

Assume that seller $i$ sets a positive probability to a dense subset of $I=(p',p*)$, whereas seller $j$ does not select any prices in this interval. Let $p\in I$.

Note that $\beta_i$ is strictly decreasing in the neighborhood of $p$, and $\beta_j$ remains constant there. This is since $i$ select prices in the neighborhood of $p$ and $j$ not.

Note that from the definition it is known that $\alpha_i / \beta_j = \alpha_j / \beta_i$. Since $\beta_i$ is decreasing and $\beta_j$, and conclude that $\alpha_j$ is decreasing more rapidly then $\alpha_i$.

Therefore, for any price $p \in I$, the ratio of the parameters is $\alpha_i(p)>\alpha_j(p)$, except maybe the infimum of the interval.

Similarly, if both $i$ and $j$ do not select prices in an interval then $\alpha_i$ and $\alpha_j$ decrease in such interval at the same rate.

Let $\hat{p}$ be the infimum of an interval that is to the right of $I$, and is selected by $j$. For $\hat{p}$, the $\alpha$ parameters need to satisfy $\alpha_j(p) \geq \alpha_i(p)$. If both select this price - equality, if only $j$ does so - weak inequality.

Note that at $p$ the opposite inequality holds, and in all points between $p$ and $\hat{p}$, the parameter $\beta_j$ is decreasing less than $\beta_i$. Since all the $\alpha$'s and $\beta$'s change continuously everywhere except $P_M$, it is the case that $j$ cannot offer such prices.

Concluding, if a seller does not select an interval within $[P_L,P_M)$ she would not select any price above it, except possibly $P_M$, where the equation holds due to zero probability to sell to shoppers.
\qed

As shown in Lemma \ref{SearcherCondLemma}, for the reserve price to make sense, the following condition is sufficient:
The expected price of a seller is at least $P_L-c$.

Combining the lemmas the theorem \ref{symmod} is obtained.

\section{Asymmetric Model Proofs}

In this section I deal with the extended model and provide the results for the case when the smallest seller is not unique. Namely, this appendix shows the proof for theorem \ref{asymmod}-

\subsection{Similarity to old model}\label{Sec:symres}

The following lemmas carry over to this model with the same proof as before. The results in this subsection will be valid also for the single smallest firm case.

\begin{lemma}
No seller offers a price above $P_M$ in NE.
\end{lemma}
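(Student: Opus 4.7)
The plan is to essentially transplant the proof of Lemma \ref{max_res_lemma} into the extended setting, since the argument there was already stated to cover both models. Let $p$ be the supremum of prices offered in the purported NE, and suppose toward contradiction that $p > P_M$. As before, such a supremum exists because of the finite upper bound on prices. The goal is to show that in every configuration consistent with $p$ being the supremum, the seller(s) charging near $p$ obtain zero (or arbitrarily near zero) profit from that tail of their strategy, while a deviation, say to the pure price $c$, yields strictly positive profit bounded away from zero.

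The natural breakdown is into four subcases, exactly as in the original proof. First, if a single seller $i$ has a unique mass point at $p$, then any searcher encountering that store observes a price above her reserve price, rejects it and continues searching among the remaining stores according to the $n_j/N$ propensities; shoppers meanwhile buy elsewhere with probability one since someone is strictly cheaper. Second, if no seller has a mass point at $p$, the probability of selling to shoppers at prices in a small left-neighborhood of $p$ goes to zero, so profits from those prices are arbitrarily small. Third, if a proper subset of sellers places mass at $p$, the situation for any individual seller in that subset is as in case one: searchers reject and move on. Fourth, if every seller has a mass point at $p$, then a small undercut captures the entire shopper mass on the positive-probability event that all sellers are realized at $p$, yielding a strict jump in profit.

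In each case I would then compare with the deviation to the pure price $c$: since searchers accept any price below $c$ by the standing assumption, seller $i$ deviating to $c$ captures at least the mass $Src_i = (1-\mu)n_i/N$ of searchers who initially land at one of her stores, plus possibly shoppers, yielding profit bounded below by $c \cdot Src_i > 0$. This strictly exceeds the (arbitrarily small or zero) profit from pricing near $p$, giving a profitable deviation and hence a contradiction with NE.

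The only subtlety introduced by the extended model is that a searcher who is dissatisfied at a store of seller $i$ does not return to another store of seller $i$; but this does not help the seller charging above $P_M$, it only strengthens the argument that her searcher mass is lost once any of her stores is observed. Propensity-weighted rather than uniform sampling likewise changes only the constants, not the direction of any inequality. Thus the lemma carries over verbatim, and the chief task is simply the careful bookkeeping of searcher flows under the modified sampling rule; no new ideas are needed.
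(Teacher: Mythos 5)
Your proof is essentially the paper's own argument: the paper already proves this claim for both models at once in Lemma \ref{max_res_lemma}, using exactly your four-case decomposition (unique mass point at the supremum, no mass point, mass points by some sellers, mass points by all sellers) together with the deviation to the price $c$, and the appendix merely states that this proof carries over to the extended model. Your additional bookkeeping of propensity-weighted sampling and the no-return rule is consistent with what the paper implicitly relies on, so nothing more is needed.
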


\begin{corollary}
Each searcher buys at the first store she visits.
\end{corollary}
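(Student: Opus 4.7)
The plan is to derive the corollary as an immediate consequence of the preceding lemma combined with the reserve-price behavior rule laid out in the game timeline. No new machinery is needed; the argument is a single implication chain.

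First I would recall from the timeline (steps~4--5) that every searcher begins by sampling exactly one store and observing the price $p$ posted there, and that the reserve-price rule defines her to be \emph{satisfied} precisely when $p \leq P_M$, in which case she purchases at that store and the search terminates. Next I would invoke the preceding lemma, which guarantees that in any NE no seller posts a price strictly above $P_M$. Combining the two facts, whichever store the searcher happens to sample first, the realized price $p$ satisfies $p \leq P_M$, so the satisfaction condition is triggered and the purchase takes place at that first store.

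The main thing to flag, rather than an obstacle, is that the argument must be valid in the extended model where the preceding lemma sits. The extended-model feature that a dissatisfied searcher never revisits another store of the same seller is vacuous here, because no searcher is ever dissatisfied; and the modified sampling rule (proportional to $n_i$ rather than uniform over sellers) is also irrelevant, because the conclusion holds pointwise for whichever store is drawn. There is therefore no genuine difficulty; the corollary is essentially a one-line deduction from the lemma and the definition of the reserve price.
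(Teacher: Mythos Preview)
Your proposal is correct and matches the paper's treatment: the corollary is listed immediately after the lemma ``No seller offers a price above $P_M$ in NE'' with no separate proof, precisely because it follows by the one-line implication you give (every first observed price is $\leq P_M$, so the reserve-price stopping rule fires). Your remarks on why the extended-model features are irrelevant are accurate and more explicit than the paper, which simply states that the result carries over.
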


\begin{lemma}
All sellers select $P_M$ as the supremum point of their strategy support.
\end{lemma}

\begin{lemma}
There are no mass points except possibly $P_M$
\end{lemma}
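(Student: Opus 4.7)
My plan is to lift the proof given earlier for the symmetric model (in the appendix) and check that the heterogeneous store counts $n_i$ do not spoil any step. Suppose, for contradiction, that some seller $i$ has a mass point of size $a_i > 0$ at a price $q$ with $P_L \leq q < P_M$, and let $S$ denote the set of all sellers carrying a mass point at $q$. I would split on whether $|S| \geq 2$ or $|S| = 1$.

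In the case $|S| \geq 2$, with strictly positive probability all sellers in $S$ simultaneously realize $q$, and in that event the mass $\mu$ of shoppers is split among them. A deviation by $i$ to $q - \varepsilon$ makes her the unique cheapest seller in that event, so she captures all shoppers on a positive-probability set while losing only a first-order amount $\varepsilon$ in per-unit price. Since the searcher share $n_i(1-\mu)/N$ is unaffected (the shift keeps the price weakly below $P_M$), the deviation is strictly profitable for small $\varepsilon$, contradicting equilibrium. Notice that the store counts enter only through the additive searcher term, which drops out of the comparison.

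In the case $|S| = \{i\}$, I would treat two subcases. If some rival $j \neq i$ has positive probability of pricing in $(q, q + \delta)$ for every $\delta > 0$, then $j$ can reassign such prices to $(q - \delta, q)$; the mass point of $i$ at $q$ now sits strictly above $j$'s new price instead of weakly below it, so $j$'s probability of being strictly cheapest jumps by at least $a_i$, giving a first-order gain in her expected shopper revenue — contradiction. Otherwise, there is an interval $(q, q + \delta)$ in which no rival prices at all, so each $F_j$ ($j \neq i$) is constant there and hence $\alpha_i$ is constant on that interval. Seller $i$ can then move just her mass at $q$ to $q + \delta/2 < P_M$: her shopper-probability is unchanged, her searcher share $n_i(1-\mu)/N$ is unchanged (still below the reserve price), and her per-unit price strictly increases, again contradicting equilibrium.

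The only delicate step is the second subcase of $|S| = 1$: one must check that perturbing the mass at $q$ does not interact with other parts of $i$'s own distribution in $(q, q+\delta)$, but this is handled by simply choosing $\delta$ small enough that the interval contains no other atom of $i$, and by noting that $i$'s own $F_i$ does not enter her own $\alpha_i$. With all three cases ruled out, no mass point can sit strictly below $P_M$, which is precisely the claim.
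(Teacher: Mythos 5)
Your proof is correct and follows essentially the same route as the paper: undercutting when two or more sellers share an atom, and, when the atom is unique, observing that rivals will not price just above it so the atom can be shifted upward toward $P_M$ for a strict gain. Your explicit check that the store counts $n_i$ enter only through the additive searcher term $n_i(1-\mu)/N$ is exactly why the paper states the symmetric-model proof carries over unchanged to the extended model.
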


\begin{lemma}
Exists an interval $I$ such that the union of the seller strategies is contained in $I$ and dense in it.
\end{lemma}

\begin{corollary}
Exists an interval $I=[P_L,P_M]$, such that any NE strategy profile the sellers randomize continuously over $I$, and possibly some sellers set mass points at $P_M$.
\end{corollary}

\begin{lemma}
The previous lemma holds also for two sellers. Meaning - any interval has a point in distribution of at least two sellers.
\end{lemma}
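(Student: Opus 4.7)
The plan is to mirror the corresponding argument from the original model essentially verbatim, since the extended model differs only through a positive constant $(1-\mu)n_i/N$ in place of $(1-\mu)/n$ in the searcher-revenue term, which does not affect the monotonicity on which the argument rests. Suppose for contradiction that some open sub-interval $J = (p, p') \subset (P_L, P_M)$ intersects the support of at most one seller; by the previous lemma (the union of strategy supports is dense in $[P_L, P_M]$) that seller $i$ must actually place continuous positive mass on all of $J$. I will exhibit a profitable deviation for $i$ by replacing her mixing on $J$ with a point mass at the upper endpoint $p'$.

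The key observation is that, because no other seller contributes any mass inside $J$, each $F_j$ with $j\neq i$ is constant on $J$, so the shopper-capture probability
\begin{equation}
\alpha_i(q) \;=\; \prod_{j\neq i}\bigl(1-F_j(q)\bigr)
\end{equation}
equals a single value $\alpha^*$ for every $q \in J$. The revenue per unit mass of seller $i$ at price $q \in J$ is therefore
\begin{equation}
q\Bigl(\mu\alpha^* + (1-\mu)\tfrac{n_i}{N}\Bigr),
\end{equation}
which is strictly increasing in $q$. Consequently, integrating this expression against $i$'s original mixing density on $J$ yields strictly less than the same expression evaluated at $p'$, so relocating all of that mass to a single atom at $p'$ strictly increases $i$'s expected profit. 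Note that searcher behaviour is unaffected by the deviation because $p' < P_M$, so the reserve-price stopping rule continues to trigger immediately.

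The only subtlety is ensuring that placing a new atom at $p'$ does not collide with an existing atom of another seller, which could make $\alpha^*$ drop discontinuously upon consolidation. But the earlier "no mass points except possibly at $P_M$" lemma has already been carried over to the extended model in the preceding subsection, so for $p' < P_M$ no such collision can occur, and the deviation is unambiguously profitable. This contradiction with NE proves the lemma, and as in the original model yields the immediate corollary that every sub-interval of $[P_L, P_M]$ contains points in the support of at least two sellers.
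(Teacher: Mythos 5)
Your proposal is correct and follows essentially the same route as the paper: the paper simply carries over the original-model argument, in which the lone seller covering the interval profitably deviates by shifting that mass to an atom at the upper endpoint $p'$. Your write-up just makes explicit the supporting details (constancy of $\alpha_i$ on the gap, strict monotonicity of $q(\mu\alpha^* + (1-\mu)n_i/N)$, and the absence of atoms below $P_M$) that the paper leaves implicit.
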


\begin{corollary}
Any interval between $P_L$ and $P_M$ has points in the support of at least two sellers.
\end{corollary}

\subsection{Asymmetric Model Results}

However once one start dealing with the profits a difference exists:

Let the profit divided by the searchers fraction be denoted as the 'Profit per Branch' (PPB), and denote it as $\hat{\pi}$. The PPB measures the profit the seller gets divided by her store number.

\begin{lemma}
All sellers have the same PPB. That is, $\frac{\pi_i}{n_i}$ is equal to all sellers.
\end{lemma}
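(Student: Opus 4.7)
The approach parallels the symmetric-model argument that all sellers earn the same equilibrium profit, but now tracking the quantity $\pi_i/n_i$ with target value $P_M(1-\mu)/N$. The easy direction is the lower bound: by unilaterally deviating to the pure strategy $P_M$, seller $i$ captures at least her initial share of searchers, a mass $(1-\mu)n_i/N$, since $P_M$ is their reserve price. This yields $\pi_i \geq P_M(1-\mu)n_i/N$, and hence $\pi_i/n_i \geq P_M(1-\mu)/N$ for every seller.

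For the matching upper bound I will argue that at least two sellers carry no mass point at $P_M$. The scenario where every seller has a mass point is eliminated by the earlier lemma forbidding mass points at prices that attract shoppers with positive probability: if every $a_j>0$, an $\varepsilon$-undercut of $P_M$ strictly improves profit by converting a shared slice into all shoppers with probability $\prod_{k\neq i}a_k>0$. The delicate case is when exactly one seller, call her $i^*$, has no mass point at $P_M$. Then for every other seller $j$, playing $P_M$ wins zero shoppers because $i^*$ undercuts almost surely; this pins $\pi_j = P_M(1-\mu)n_j/N$ and therefore $\pi_j/n_j = P_M(1-\mu)/N$. If PPB were to fail to be constant, we would need $\pi_{i^*}/n_{i^*} > P_M(1-\mu)/N$ strictly.

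To rule this out I would examine $p_{i^*}$, the infimum of $i^*$'s support. Since no mass point lies below $P_M$, $F_{i^*}(p_{i^*})=0$, so $\alpha_{i^*}(p_{i^*})/\alpha_j(p_{i^*})=1-F_j(p_{i^*})\leq 1$ for any $j\neq i^*$. Writing out $\pi_{i^*}(p_{i^*})=\pi_{i^*}$ together with the deviation bound $\pi_j(p_{i^*})\leq \pi_j = P_M(1-\mu)n_j/N$, subtracting the searcher terms from both inequalities, and dividing the two, I expect to obtain
\begin{equation*}
1-F_j(p_{i^*}) \;>\; \frac{n_{i^*}}{n_j}.
\end{equation*}
The non-uniqueness hypothesis now enters decisively: I can select $j\neq i^*$ with $n_j\leq n_{i^*}$ (another smallest seller if $i^*$ is herself smallest, any smallest seller otherwise), which gives $n_{i^*}/n_j\geq 1\geq 1-F_j(p_{i^*})$ and contradicts the strict inequality.

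Once at least two sellers lack a mass point at $P_M$, the equilibrium profit of every seller is pinned down: those with a mass point at $P_M$ win no shoppers there because a competitor undercuts with probability one, while those without a mass point satisfy $\alpha_i(P_M^{-})=\prod_{j\neq i}a_j=0$ in the limit because some $a_j$ vanishes. Both cases yield $\pi_i = P_M(1-\mu)n_i/N$, hence $\pi_i/n_i = P_M(1-\mu)/N$ uniformly. The main obstacle is the single-exception sub-case, and this is precisely where the non-unique-smallest-seller hypothesis is indispensable; without it, $i^*$ could be a lone smallest seller for which the displayed inequality carries no contradiction, which is exactly the asymmetric phenomenon isolated in the next section on the unique smallest firm.
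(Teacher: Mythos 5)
Your proposal is correct and follows essentially the same route as the paper: handle the easy cases via the undercutting/no-mass-point facts, then in the delicate case of a single seller $i^*$ without a mass point at $P_M$, evaluate profits at the infimum $p_{i^*}$ of her support and compare with a seller $j$ satisfying $n_j \leq n_{i^*}$, whose existence is exactly what the non-unique-smallest-seller assumption guarantees. The only cosmetic difference is that you obtain the contradiction by dividing the two rearranged profit inequalities to get $1-F_j(p_{i^*}) > n_{i^*}/n_j$, whereas the paper compares the two PPB expressions at $p_{i^*}$ directly using $0 = F_{i^*}(p_{i^*}) \leq F_j(p_{i^*})$ and $n_j \leq n_{i^*}$.
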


Similarly to the previous case, all sellers who have a mass point at $P_M$ have the probability of 0 to sell to shoppers at that price. This is since in any NE, due to undercutting, at least one seller will not have a mass point at that price and will select lower prices w.p. 1. Therefore PPB for sellers with a mass point at $P_M$ will be:
\begin{equation}
\hat{\pi}=\pi/n_i = P_M(1-\mu)/N
\end{equation}
Similarly, that would be the PPB if $P_M$ does not attract shoppers w.p.1.

Similarly to the symmetric model the only additional case that needed to be shown is: If $n-1$ sellers have the same PPB, the other seller cannot have a profit above them. Note that if only one seller has no mass point at $P_M$ then she must have a weakly higher PPB, as she can always deviate a strategy with a mass point at $P_M$. 

Suppose seller $i$ is the only seller who does not offer a mass point at $P_M$.
Let $p_i$ be the lowest (infimum if needed) price in the support of $i$. Additionally, seller $j$ with $n_j \leq n_i$. Since the lowest $n_i$ is not unique (from assumption) such $j$ always exists.

The profit of seller $i$ offering $p_i$ is (remember all searchers visit exactly one store):
\begin{equation}
\pi_i(p_i)=p_i((1-F_j(p)\prod_{k \neq i,j}(1-F_k(p_i))\mu+(1-\mu)n_i/N)
\end{equation}
The profit of seller $j$:
\begin{equation}
\pi_j(p_i)=p_i((1-F_i(p)\prod_{k \neq i,j}(1-F_k(p_i))\mu+(1-\mu)n_j/N)
\end{equation}

If one calculates the PPB for the two sellers, it will be equal to:
\begin{eqnarray}
\hat{\pi_i} = \frac{p_i((1-F_j(p)\prod_{k \neq i,j}(1-F_k(p_i))\mu}{n_i} + p_i \\
\hat{\pi_j} = \frac{p_i((1-F_i(p)\prod_{k \neq i,j}(1-F_k(p_i))\mu}{n_j} + p_i
\end{eqnarray}

And since $0=F_i(p_i) \leq F_j(p_i)$ one gets that the expression in the nominator of $\hat{\pi_j}$ is weakly higher than the of $\hat{\pi_i}$. 

Combined with the fact that $n_i \geq n_j$, it is clear that PPB of seller $j$ is weakly higher, which is a contradiction. \qed

Let $P_L$ be the lowest price seller $i$ can offer, while obtaining the PPB of $P_M$. $P_L$ needs to satisfy:

\begin{equation}
P_L (\mu / n_i +(1-\mu)/N) = P_M(1-\mu)/N)
\end{equation}

Note that $P_L$ is increasing in $n_i$. The point in this is only the sellers with the lowest store number will actually select a price at which they will sell to shoppers with certainty. Others will join in at a higher price, when the probability to attract shoppers is smaller than 1.

As before let $\alpha_i(p)$ be the probability of seller $i$ to sell to shoppers when offering price $p$.

Similarly to the identical sellers case, in order for seller $i$ to have the PPB of $P_M$ each price $p$ needs to satisfy:
\begin{equation}\label{PPB Equality}
p(\mu \alpha_i(p)/n_i + (1-\mu)/N) \leq P_M(1-\mu)/N
\end{equation}
As before, only seller with the highest $\alpha/n$ ($\alpha$ divided by the stores number) at a price will select it.

\begin{lemma}\label{None_Above}
Suppose seller $i$ has an interval $I$ in his support, and seller $j$ does not. This implies that in the support of seller $j$ there are no prices above $I$ except $P_M$.
\end{lemma}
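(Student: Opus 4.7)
The plan is to adapt the monotonicity argument used in the symmetric case to the PPB setting, with the role of the shopper-capture probability $\alpha_k$ replaced by the per-branch ratio $\mu_k(p) := \alpha_k(p)/n_k$. The crucial algebraic identity is $\alpha_k(p)\beta_k(p) = \prod_\ell \beta_\ell(p)$, which together with the definition of $\mu_k$ gives
\[
\frac{\mu_i(p)}{\mu_j(p)} = \frac{n_j\,\beta_j(p)}{n_i\,\beta_i(p)}.
\]
From the no-deviation inequality \eqref{PPB Equality}, a seller $k$ offers a price $p<P_M$ if and only if $\mu_k(p)$ attains with equality the common threshold $\tau(p):=\frac{(1-\mu)(P_M-p)}{\mu N p}$, while every other seller must satisfy $\mu_\ell(p)\le\tau(p)$.

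First I would establish strict separation at $p^\ast:=\sup I$. Because no mass points exist in $(P_L,P_M)$, the distribution $F_i$ is continuous and strictly increasing on $I$, so $\beta_i$ strictly decreases there; since $j$ offers nothing in $I$, $\beta_j$ is constant on $I$. Hence $\mu_i/\mu_j = (n_j/n_i)(\beta_j/\beta_i)$ is strictly increasing on $I$. At the infimum of $I$ we already have $\mu_i\ge\mu_j$ (because $i$ offers there, pinning $\mu_i$ to $\tau$, while $\mu_j\le\tau$), and propagating the strict monotonicity of the ratio through $I$ yields $\mu_i(p^\ast)>\mu_j(p^\ast)$ strictly.

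Next I would push this separation above $p^\ast$ and close the argument by contradiction. Suppose $j$'s support meets $(p^\ast, P_M)$; let $q$ be the infimum of $j$'s support in this range. On $(p^\ast,q)$ seller $j$ does not offer, so $\beta_j$ is constant, while $\beta_i$ is non-increasing regardless of whether $i$ offers there. Hence $\mu_i/\mu_j$ is non-decreasing on $(p^\ast,q)$, and by continuity of $\beta_k$ on $(P_L,P_M)$ we get $\mu_i(q)/\mu_j(q)\ge\mu_i(p^\ast)/\mu_j(p^\ast)>1$, i.e.\ $\mu_i(q)>\mu_j(q)$. On the other hand, $q$ lies in the closure of $j$'s support and $q<P_M$, so by continuity $\mu_j(q)=\tau(q)$; the PPB inequality for $i$ at $q$ forces $\mu_i(q)\le\tau(q)=\mu_j(q)$, contradicting the previous strict inequality. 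Therefore $j$'s support contains no prices in $(p^\ast,P_M)$, leaving only the possible mass point at $P_M$.

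The main obstacle is handling the two boundary points of the monotonicity argument cleanly, namely $\inf I$ (where $\mu_i$ and $\mu_j$ may still coincide before the ratio strictly separates) and $q$ (which may be an accumulation point of $j$'s support rather than an interior atom). Both issues are resolved by exploiting continuity of the $F_k$'s on $(P_L,P_M)$, established by the no-mass-points lemma for the extended model cited in Section~\ref{Sec:symres}; this is what permits taking limits of the strict ratio inequality and of the PPB equality at $q$.
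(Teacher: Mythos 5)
Your proof is correct and follows essentially the same route as the paper: your ratio $\mu_i(p)/\mu_j(p)=\frac{n_j\beta_j(p)}{n_i\beta_i(p)}$ is exactly the paper's comparison of $\frac{1-F_j(p)}{n_i}$ with $\frac{1-F_i(p)}{n_j}$, and the contradiction at the first price of $j$'s support above $I$ (strict inequality because $F_j$ is flat while $F_i$ has risen over $I$) is the paper's argument, which you merely carry out with more careful handling of the boundary and accumulation points via continuity of the $F_k$'s below $P_M$.
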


This lemma is similar to the one in the symmetric case, however here it has stronger implications as the $P_L$ differs between sellers.

As before, the probability to attract shoppers is as follows:
\begin{equation}
\alpha_i(p) = (1-F_j(p))\prod_{k \neq i,j} (1-F_k(p))
\end{equation}

If one compares the PPB of seller $i$ and seller $j$:
\begin{eqnarray}
\frac{\alpha_i(p)}{n_i} = (1-F_j(p))\frac{\prod_{k \neq i,j} (1-F_k(p))}{n_i} \\
\frac{\alpha_j(p)}{n_j} = (1-F_i(p))\frac{\prod_{k \neq i,j} (1-F_k(p))}{n_j}
\end{eqnarray}
And it is known that the first is weakly larger than the second, as only the highest $\alpha/n$ can have the price in support. This implies that:

\begin{equation}
\frac{(1-F_j(p))}{n_i} \geq \frac{(1-F_i(p))}{n_j}
\end{equation}

It is shown that in $I$, seller $i$ has the highest $\alpha/n$. Assume that exists a price $p$ which is the lowest price above $i$ selected by seller $j$. This implies that at price $p'<P_M$ seller $j$ has the maximal $\alpha/n$. Note that since the prices between $I$ and $p$ were not selected, $F_j(p')=F_j(p)$ as $j$ does not select prices in between. $F_i$ however, had increased in $I$, it is known that seller $i$ has some mass over $I$. This implies that the inequality above holds also for $p'$, as an element on the right hand side was decreased, and at $p'$ it holds strictly:

\begin{equation}
\frac{(1-F_j(p'))}{n_i} > \frac{(1-F_i(p'))}{n_j}
\end{equation}

\begin{remark}
The reason that the result holds for prices below $P_M$ is that in the course of the proof a division by $\prod_{k \neq i,j} (1-F_k(p))$ was applied, and it needs to be positive. This happens at any price below $P_M$.
\end{remark}

The inequality above implies that seller $j$ cannot select the price $p'$, as seller $i$ has a higher $\alpha/n$, for $p<P_M$. Thus, if seller $j$ does not select some interval in the support union, she will not select any price above it except possibly $P_M$. \qed

\begin{corollary}
In any NE of the game, all sellers that do not have the lowest store number will select $P_M$ as a pure strategy. This is since such sellers cannot offer the price $P_L$ with sufficiently high PPB.
\end{corollary}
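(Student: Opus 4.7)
The plan is to combine the seller-specific lower bound on profitable prices with Lemma \ref{None_Above} to force every non-smallest seller to offer only $P_M$. From inequality \eqref{PPB Equality}, any price $p$ in the support of seller $i$ must satisfy the equality with $\alpha_i(p) \leq 1$; this forces $p \geq P_L^i$, where $P_L^i$ is defined by $P_L^i(\mu/n_i + (1-\mu)/N) = P_M(1-\mu)/N$. A direct algebraic check confirms that $P_L^i$ is strictly increasing in $n_i$, so $P_L^m < P_L^i$ whenever $n_i > n_m$.

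First I would locate the infimum of the support union. By the lemmas transferred from the symmetric model in Section \ref{Sec:symres} (the support union is a dense subset of some interval $[P_L,P_M]$, every subinterval is hit by at least two sellers, and at least two smallest sellers exist by the standing assumption of this section), the sellers of size $n_m$ can and must offer prices arbitrarily close to $P_L^m$, while sellers with $n_i > n_m$ cannot offer any price below $P_L^i$. Hence the leftmost portion of the support union, namely a sub-interval of the form $(P_L^m, P_L^j)$ where $n_j > n_m$ is the next-smallest size present, is populated exclusively by the smallest sellers, with the required positive density coming from at least two of them.

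Next I would apply Lemma \ref{None_Above}: pick any seller $i$ with $n_i > n_m$; the smallest sellers densely cover the sub-interval $(P_L^m, P_L^j)$, but seller $i$ does not. The lemma then rules out every price of $i$ above that sub-interval except $P_M$. Combined with the lower bound $p \geq P_L^i > P_L^m$ on all prices $i$ could conceivably offer, this pins the support of $i$ to the singleton $\{P_M\}$, so $i$ plays $P_M$ as a pure strategy. The main obstacle I expect is making the covering step fully rigorous: one must confirm that the smallest sellers really do mix with positive mass throughout $(P_L^m, P_L^j)$ (rather than touching it only at isolated points) so that Lemma \ref{None_Above} applies uniformly to every larger seller. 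This follows from the density of the support union in $[P_L, P_M]$, the absence of mass points below $P_M$, and the fact that only smallest sellers are eligible below $P_L^j$, but it is the step where the various transferred lemmas of Section \ref{Sec:symres} must be carefully stitched together before the conclusion drops out.
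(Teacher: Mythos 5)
Your proposal is correct and takes essentially the paper's own route: the seller-specific threshold $P_L^i$ implied by inequality \eqref{PPB Equality} (increasing in $n_i$) shows that only smallest sellers can price near the bottom of the support union, and Lemma \ref{None_Above} then excludes every price other than $P_M$ for any larger seller. The one step you gloss over --- that the infimum of the support union is exactly $P_L^m$ and not some larger seller's threshold --- follows in one line by applying \eqref{PPB Equality} (which must hold at all prices, not only support prices) just below the infimum, where $\alpha_i=1$, so nothing essential is missing.
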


Combining the lemmas the proof of theorem \ref{asymmod} is obtained.

\section{Single Smallest Firm}

In this section I provide the proof for theorem \ref{singlefirm}

As before the smallest store number of a seller is denoted as $n_m$, and is the parameter of seller $m$. The next smallest size is denoted $n_j$ and is the parameter of seller $j$.

Here I deal with the extended model, and the case that a unique single firm exists. Thus, I provide here the required steps to the proof of theorem \ref{singlefirm}. The lemmas in section \ref{Sec:symres} hold with the same proof. The first difference occurs when dealing with PPB, and it is as follows:

\begin{lemma}
In any NE the PPB of all sellers except $m$ is equal to $P_M\frac{1-\mu}{N}$. The PPB of seller $m$ is strictly higher. Additionally, all sellers except $m$ have mass points at $P_M$.
\end{lemma}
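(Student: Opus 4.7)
The plan is to prove (i) the strict PPB inequality $\hat{\pi}_m > P_M(1-\mu)/N$ as the critical step, then derive (ii) every seller other than $m$ has a mass point at $P_M$, and (iii) each such seller's PPB equals $P_M(1-\mu)/N$. For (i), I would argue by contradiction. Suppose $\hat{\pi}_m = P_M(1-\mu)/N$. Each seller $k$ can unilaterally guarantee PPB at least $P_M(1-\mu)/N$ by playing the pure strategy $P_M$ (the undercutting argument forces some competitor to lie strictly below $P_M$ with probability one, so at $P_M$ she collects only her own $(1-\mu)n_k/N$ searchers); combined with maximality of $\hat{\pi}_m$ this forces every seller's NE PPB to equal $P_M(1-\mu)/N$. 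Let $P_L$ denote the infimum of the support union, and define $P_L^i := P_M(1-\mu)n_i/[N\mu + (1-\mu)n_i]$, the unique price at which a size-$n_i$ seller with $\alpha_i = 1$ attains PPB $P_M(1-\mu)/N$. Since $P_L^i$ is strictly increasing in $n_i$, one has $P_L^m < P_L^k$ for every $k \neq m$.

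Two cases arise. If $m$'s support has infimum $P_L$, the two-seller-per-interval corollary provides a seller $k \neq m$ with points arbitrarily close to $P_L$; atomlessness of every $F_j$ at $P_L$ (from Section~\ref{Sec:symres}) forces $\alpha_m(P_L^+) = \alpha_k(P_L^+) = 1$, so the constant-PPB condition at $P_L^+$ demands both $P_L = P_L^m$ and $P_L = P_L^k$, which is impossible. If instead $m$'s support infimum strictly exceeds $P_L$, let $j \neq m$ have $P_L$ in her support; then $\alpha_j(P_L^+) = 1$ forces $P_L = P_L^j > P_L^m$, and $m$'s deviation to $P_L$ (where she lies strictly below every other seller's support, so $\alpha_m = 1$) yields PPB $P_L^j[\mu/n_m + (1-\mu)/N] > P_L^j[\mu/n_j + (1-\mu)/N] = P_M(1-\mu)/N$, contradicting her assumed NE PPB. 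Hence $\hat{\pi}_m > P_M(1-\mu)/N$.

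For (ii) and (iii), $m$ cannot have a mass point at $P_M$: such a mass would reduce her profit there to $P_M(1-\mu)n_m/N$ (since at least one competitor is atomless at $P_M$ and strictly below with probability one), giving PPB $P_M(1-\mu)/N$ and contradicting (i). Moreover no other seller $k$ can be atomless at $P_M$: if she were, then $1 - F_k(P_M^-) = 0$ would force $\alpha_m(P_M^-) = 0$, so $m$'s PPB at prices approaching $P_M$ would collapse to $P_M(1-\mu)/N$, again contradicting (i). Hence every $k \neq m$ carries a mass point at $P_M$, at which she sells to no shoppers (since $m$ lies strictly below with probability one), giving profit $P_M(1-\mu)n_k/N$ and PPB $P_M(1-\mu)/N$. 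The principal obstacle is (i): the comparison argument of Theorem~\ref{asymmod} is unavailable once $m$ is unique, so one must replace it by coupling the monotonicity of $P_L^i$ in $n_i$ with atomlessness at $P_L$ and the two-seller-per-interval property; a minor technicality in the first case is ensuring that the competitor $k$ close to $P_L$ is distinct from $m$, which is automatic from the "at least two sellers" lemma.
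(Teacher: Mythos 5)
Your parts (ii) and (iii) are correct, and your overall strategy rests on the same two ingredients as the paper's proof: the comparison of profit per branch at the bottom of the support (where a smaller seller earns strictly more per store) and the dichotomy on mass points at $P_M$. The one genuine defect is in part (i), in the sentence ``combined with maximality of $\hat{\pi}_m$ this forces every seller's NE PPB to equal $P_M(1-\mu)/N$.'' That appeal is circular: the maximality of $m$'s PPB is precisely (part of) what the lemma asserts, and nothing established up to that point delivers it --- a priori some seller $k\neq m$ could be the unique seller without an atom at $P_M$ and carry the strictly higher PPB. You then lean on this unjustified ``all PPBs equal'' claim to obtain the exact equalities $P_L=P_L^k$ (Case 1) and $P_L=P_L^j$ (Case 2), so as written the contradiction in both cases rests on an unproven premise.

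The flaw is local and repairable with material already in your argument: the pure-$P_M$ deviation gives the one-sided bound $\hat{\pi}_k\geq P_M(1-\mu)/N$ for every seller $k$, and that suffices. In Case 1, under the hypothesis $\hat{\pi}_m=P_M(1-\mu)/N$, $m$'s indifference at $P_L^{+}$ gives $P_L=P_L^m$, and then $k$'s indifference gives $\hat{\pi}_k=P_L\left(\mu/n_k+(1-\mu)/N\right)<P_L\left(\mu/n_m+(1-\mu)/N\right)=P_M(1-\mu)/N$, contradicting the bound for $k$. In Case 2, $j$'s indifference plus the bound gives only $P_L\geq P_L^j$, but that is enough: $m$'s deviation to $P_L$ yields PPB $P_L\left(\mu/n_m+(1-\mu)/N\right)>P_L\left(\mu/n_j+(1-\mu)/N\right)=\hat{\pi}_j\geq P_M(1-\mu)/N=\hat{\pi}_m$, the desired contradiction. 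For comparison, the paper avoids the contradiction-plus-case-split altogether: it takes a seller $i\neq m$ with prices arbitrarily close to $P_L$ in support (such $i$ exists by the two-seller corollary and the uniqueness of $m$), observes $\hat{\pi}_m\geq P_L\left(\mu/n_m+(1-\mu)/N\right)>P_L\left(\mu/n_i+(1-\mu)/N\right)=\hat{\pi}_i$ whether or not $m$ prices at $P_L$, and only then invokes the $P_M$-atom dichotomy (at least two atomless sellers would force all PPBs equal) to identify $m$ as the unique atomless seller and the rest as having PPB $P_M(1-\mu)/N$; your ordering proves the strict bound first and reads the atom structure off it, which is fine once the circular step is replaced as above.
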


The lowest price in the support union and denoted $P_L$.

PPB (profit per branch) of seller $i$ is the profit of seller $i$ divided by the $n_i$, which is a constant for a given seller.

Similarly to the previous case, some sellers mix, and at least two have at the support the price of $P_L$, where some seller $i \neq m$ is one of those. Comparing PPB of sellers $i$ and $m$ yield the following equations:

\begin{eqnarray}
PPB_i = P_L(\mu/n_i+(1-\mu)/N)\\
PPB_m = P_L(\mu/n_m+(1-\mu)/N)
\end{eqnarray}
Since $n_m< n_i$ from definition, it is clear that if $m$ offers this price she will have a higher PPB. This implies that if $m$ offers $P_L$ then she must have a higher PPB than all other sellers. 

If $m$ does not offer $P_L$, she can deviate to $P_L$ and increase her profit. Therefore, the PPB of seller $m$ is strictly larger than the PPB of some other sellers.

As the support supremum of all sellers is $P_M$, one of the two cases must hold (if all have mass points at $P_M$ undercutting is possible):

\begin{itemize}
\item At least two sellers do not have a mass point at $P_M$
\item A single seller does not have a mass point at $P_M$
\end{itemize}

In the first case all have equal PPB, as all offer the price $P_M$ and by this price attract no shoppers, having the PPB of $P_M(1-\mu)/N$. In the second case, all but one seller have the mentioned PPB, when the last seller have a weakly higher one. From the previous steps it is known that the second case holds, and the seller without a mass point at $P_M$ is seller $m$, and she has the highest PPB. \qed.

From lemma \ref{None_Above}, only sellers with $n_j$ can have prices below $P_M$ in support. As at least two sellers need to mix over the entire interval, and sellers with $n_j$, and all sellers except $m$ have the same PPB, all the sellers with parameter $n_j$ can mix over the entire interval, have a cutoff price or select the reserve price purely. Additionally, when mixing the same distr. must be used. Note that at least one of the sellers, denoted $j$, must mix over the entire interval, as the support must be covered by at least two sellers.

The last point needed to be show is that $m$ has higher probability to offer discount. Namely, $F_j < F_m$ in $(P_L,P_M)$. Remember that:
\begin{equation}
\alpha_i(p)= \prod_{k \neq i}(1-F_k(p)=\frac{\prod\left(1-F_k(p)\right)}{1-F_i(p)}
\end{equation}
Therefore, the larger $\alpha$ will have the larger $F$. If one writes the profit expressions for the sellers $j$ and $m$:
\begin{eqnarray}
\pi_m(p)=p(\alpha_m(p)+\frac{n_m(1-\mu)}{N}) = P_L\left(\mu+\frac{n_m(1-\mu)}{N}\right)\\
\pi_m(p)=p(\alpha_j(p)+\frac{n_j(1-\mu)}{N}) = P_M\left(\frac{n_j(1-\mu)}{N}\right)
\end{eqnarray}

For simplicity, I denote $Src_i=(1-\mu)n_i/N$ 

Extracting the expressions for $\alpha_m$ and $\alpha_j$ the following equation is obtained:
\begin{eqnarray}
\alpha_m = \frac{(P_L-p)Src_m+P_L\mu}{p \mu}\\
\alpha_j = \frac{P_M-p}{p \mu}Src_j
\end{eqnarray}

Obtaining the derivatives and comparing, one gets that $\alpha_m > \alpha_j$ for prices in $(P_L,P_M)$, as required.

\end{appendix}

\end{document}